\newcommand{\bignone}{}
\newcommand{\mathd}{\mathrm{d}}
\newcommand{\nobracket}{}
\newcommand{\tmem}[1]{{\em #1\/}}
\newcommand{\tmmathbf}[1]{\ensuremath{\boldsymbol{#1}}}
\newcommand{\tmop}[1]{\ensuremath{\operatorname{#1}}}
\newcommand{\tmrsub}[1]{\ensuremath{_{\textrm{#1}}}}
\newcommand{\tmtextbf}[1]{\text{{\bfseries{#1}}}}
\newcommand{\tmtexttt}[1]{\text{{\ttfamily{#1}}}}
\newenvironment{proof}{\noindent\textbf{Proof\ }}{\hspace*{\fill}$\Box$\medskip}
\newcounter{nnacknowledgments}
\newtheorem{acknowledgments*}[nnacknowledgments]{Acknowledgments}}
\newtheorem{definition}{Definition}
\newtheorem{lemma}{Lemma}
\newtheorem{theorem}{Theorem}
\begin{document}

\title{Classical pair of states as optimal pair for quantum distinguishability quantifiers}

\author{Bassano Vacchini}
\email{bassano.vacchini@mi.infn.it}
\affiliation{Dipartimento di Fisica Aldo Pontremoli, Università degli Studi di Milano, via Celoria 16, 20133 Milan, Italy}
\affiliation{Istituto Nazionale di Fisica Nucleare, Sezione di Milano, via Celoria 16, 20133 Milan, Italy}

\author{Andrea Smirne}
\affiliation{Dipartimento di Fisica Aldo Pontremoli, Università degli Studi di Milano, via Celoria 16, 20133 Milan, Italy}
\affiliation{Istituto Nazionale di Fisica Nucleare, Sezione di Milano, via Celoria 16, 20133 Milan, Italy}

\author{Nina Megier}
\affiliation{Dipartimento di Fisica Aldo Pontremoli, Università degli Studi di Milano, via Celoria 16, 20133 Milan, Italy}
\affiliation{Istituto Nazionale di Fisica Nucleare, Sezione di Milano, via Celoria 16, 20133 Milan, Italy}
\affiliation{International Centre for Theory of Quantum Technologies, University of Gdańsk, 80-308 Gdańsk, Poland}

\begin{abstract}
  The capability to quantitatively distinguish quantum states is of great
  importance for a variety of tasks, and has recently played an important role
  in the study of quantum reduced dynamics and their characterization in terms
  of memory effects. A crucial property of quantum distinguishability
  quantifiers considered in the latter framework is the contractivity under
  the action of completely positive trace preserving maps. We show that this
  requirement warrants that the pairs on which these quantifiers attain their
  maximal value are pairs of orthogonal, and in this sense classical, states.
\end{abstract}

\maketitle

\section{Introduction}

Quantifying the difference between probability distributions is essential in
many and diverse fields, notably including statistics and information
theory. In particular, $f$-divergences provide a unified framework for
defining several distinguishability measures, such as the Kullback--Leibler
and the Jensen-Shannon divergences, that further possess well-defined
operational interpretations.

In the quantum setting, distinguishing probability distributions extends to
the problem of distinguishing quantum states. Accordingly, $f$-divergences for
probability distributions can be generalized to define quantum
distinguishability quantifiers, including the quantum relative entropy, the
Holevo quantity for two-element ensembles, and the trace distance. These
quantifiers are widely employed in quantum information, computation and
communication, as well as in investigating general dynamical evolutions of
quantum systems, in particular to identify the presence of memory effects or
nonclassicality {\cite{Bengtsson2017,Chruscinski2022a,Vacchini2024}}.

A fundamental property of quantum distinguishability quantifiers is their
contractivity under completely positive trace preserving maps.
The latter represents a broad family of quantum-state transformations -- arising,
for example, from non-selective measurements or interactions with the initially factorized
environment -- and the contractivity ensures that such processes cannot enhance the
capability to distinguish among the states of the system.

In this article, we establish a direct connection between the contractivity of
quantum distinguishability quantifiers under completely positive trace
preserving maps and their optimal pairs, i.e., the pair of quantum states for
which the quantifier attains its maximum value
{\cite{Wissmann2012a,Breuer2012a}}. Specifically, we demonstrate that if a
quantum distinguishability quantifier is contractive under completely positive
maps, and its invariance guarantees invertibility of the map when restricted
to commuting states, then the quantifier reaches its maximum value only and
for all orthogonal state pairs, i.e., states that share a common eigenbasis
and can thus be considered in this respect a classical pair. The relevance of
classical states in assessing properties of quantum distinguishability
quantifiers has been recently put into evidence also in
{\cite{Osan2022a,Bussandri2023a,Lanier2025a}}.

Moreover, we present in a unified framework well-known contractive quantum
distinguishability quantifiers for which the optimal pairs are all orthogonal
states, as well as non-contractive quantifiers for which orthogonality does
not guarantee optimality. Indeed, such a different behavior is possible and even
desirable for certain information tasks {\cite{DePalma2021a}}. We further
investigate the contractivity of the quantum distinguishability quantifiers by
relating it to invariance properties of the quantifiers. Specifically, we
show that if the quantifier is contractive under the partial trace operation
or jointly convex in its arguments, contractivity under completely positive
trace preserving maps is in one-to-one correspondence with the invariance
under unitary transformations and assignment maps.

The rest of the paper is organized as follows. In Sect. \ref{sect:cq} we
recall the definition and key properties of classical $f$-divergences, also
presenting some examples; we then move to the quantum setting, introducing the
notion of quantum distinguishability quantifiers, the properties they satisfy
and some significant examples that reduce to classical $f$-divergences when
evaluated on couples of commuting states. In Sect. \ref{sect:mp}, we derive
the main result of the paper, i.e., the theorem proving that orthogonality of
a couple of quantum states is a necessary and sufficient condition for their
optimality for contractive distinguishability quantifiers, and we show how
some well-known quantifiers can be understood within a unified framework by
virtue of the theorem; instead, in Sect. \ref{sect:cex}, we present quantum
distinguishability quantifiers for which contractivity under completely
positive trace preserving maps does not hold, orthogonality being a necessary
but not sufficient condition for the optimal pair. In Sect. \ref{sect:cai},
the relationships between contractivity and invariance properties for a
quantum distinguishability quantifier are presented. Finally, concluding
remarks are given in Sect. \ref{sect:con}.

\section{Classical and quantum distinguishability quantifiers}\label{sect:cq}

We introduce quantum distinguishability quantifiers starting from the
classical framework, which allows us to take the unified perspective of
$f$-divergences without lack of rigour but avoiding to deal with the many
subtle aspects involved in the definition of quantum $f$-divergences. Indeed,
due to non commutativity of the operators, different quantum generalizations
of $f$-divergences can be considered, see e.g. {\cite{Hiai2011a}}. It
is important to stress that, while $f$-divergences include most of the
well-known and studied distinguishability quantifiers {\cite{Crooks}}, not all
of them can be recast in this form. We mention in this respect the
Bhattacharyya coefficient or classical fidelity
{\cite{Bhattacharyya1943a,Bhattacharyya1946a}} and correspondingly the
quantum fidelity {\cite{Jozsa1994a}} based on Uhlmann's transition probability
{\cite{Uhlmann1976a}}. The emphasis in this concise presentation is on the
role played by the contractivity of the considered distinguishability
quantifiers with respect to the action of transformations that can describe
the effect of a noisy transmission channel, be it classical or quantum, also called data processing inequality. This
feature leads to important invariances for the distinguishability quantifiers,
in both classical and quantum formulations. We will consider in particular distinguishability
quantifiers for which saturation of the data processing inequality under the action of a noisy transformation
on a collection of states implies reversibility of the transformation on these states, as it happens 
for $f$-divergences  in the classical case when the function $f$ is strictly convex. This
result will prove instrumental to the characterization of pairs on which these
contractive distinguishability quantifiers take their maximum value.

\subsection{Distinguishability quantifiers expressible as
\texorpdfstring{$f$-divergences}{f-divergences}}\label{sect:fdiv}

The motivation for the introduction of $f$-divergences is the comparison of
classical probability distributions, typically describing signals arising from
two distinct sources, to investigate the effect of the unavoidable noise on
the propagation of these signals. The guiding idea is that the action of any
realistic noise source along the transmission line should reduce the extent to
which the signals coming from the two distinct sources can be seen to differ
or diverge from one another. It is therefore of interest to introduce ways to
compare probability distributions that keep this crucial feature into account.
This task is accomplished by $f$-divergences, introduced in
{\cite{Csiszar1963a,Csiszar1967a}} and independently in {\cite{Ali1966a}},
that are a class of distinguishability quantifiers between probability
distributions specified by a suitable convex function $f$.

Let $f$ be a non-negative continuous convex function defined on the positive real line, and
suppose for the sake of simplicity to have a discrete set $X$ of outcomes for
the probability distributions.

\begin{definition}
  (Classical $f$-divergence {\cite{Csiszar1963a,Ali1966a}})
  
  Given the distributions $\tmmathbf{p}$ with elements $\{ p_i \}_{i \in X}$
  and $\tmmathbf{q}$ with elements $\{ q_i \}_{i \in X}$ the quantity
  \begin{eqnarray}
    D_f (\tmmathbf{p}, \tmmathbf{q}) & = & \sum_{i \in X} q_i f \left(
    \frac{p_i}{q_i} \right) \bignone,  \label{eq:f}
  \end{eqnarray}
  set equal to zero for vanishing $q_i$, is the $f$-divergence between
  $\tmmathbf{p}$ and $\tmmathbf{q}$.
\end{definition}

This expression taking on positive values quantifies how $\{ p_i \}_{i \in X}$
differs from $\{ q_i \}_{i \in X}$. This quantifier is generally neither
symmetric nor bounded, which is another motivation for the name divergence,
and typically does not obey the triangle inequality. It has however a number
of interesting properties, that we detail below together with a simple proof,
putting into evidence those relevant for the subsequent treatment. As we shall
stress, it is often the case that skewed versions of the original
$f$-divergence, in which the comparison is between one probability
distribution $\{ p_i \}_{i \in X}$ and a proper mixture of $\{ p_i \}_{i \in
X}$ and $\{ q_i \}_{i \in X}$, still fall within the class for a different
choice of convex function $f$. These skewed divergences retain all relevant
properties, but turn out to be bounded, so that they can be normalized and
made symmetric.

\subsection{General properties of \texorpdfstring{$f$-divergences}{f-divergences}}\label{sect:inv}

Independently on the choice of convex function $f$, the quantity $D_f$ has a
number of relevant properties, that we introduce below.

\textit{i.) $D_f$ is non-increasing under the action of a stochastic map}

Stochastic maps correspond to the most general transformations between
probability distributions, generally describing the effect of a noisy
transmission line. Let $T$ be a stochastic map to be identified with the
matrix $T_{ij}$, with positive entries satisfying $\sum_{i \in X} \bignone
T_{ij} = 1$ for all $j \in X$, acting as follows on the considered probability
distributions
\begin{eqnarray}
  (T\tmmathbf{p})_i & = & \sum_{j \in X} \bignone T_{ij} p_j . 
\end{eqnarray}
We then have
\begin{eqnarray}
  D_f (T\tmmathbf{p}, T\tmmathbf{q}) & = & \sum_{i \in X} (T\tmmathbf{q})_i f
  \left( \frac{(T\tmmathbf{p})_i}{(T\tmmathbf{q})_i} \right) \bignone
  \nonumber\\
  & = & \sum_{i \in X} (T\tmmathbf{q})_i f \left( \sum_{j \in X} \frac{T_{ij}
  q_j}{(T\tmmathbf{q})_i} \frac{p_j}{q_j} \right) \nonumber\\
  & \leqslant & \sum_{i, j \in X} T_{ij} q_j f \left( \frac{p_j}{q_j} \right)
  \nonumber\\
  & = & \sum_{j \in X} q_j f \left( \frac{p_j}{q_j} \right), 
\end{eqnarray}
where the fact that $\{ T_{ij} q_j / (T\tmmathbf{q})_i \}_{j \in X}$ is a
probability distribution for any $i$ has allowed to exploit convexity of $f$
and in the last line we have used the fact that $T$ is a stochastic map. We
thus have the crucial result
\begin{eqnarray}
  D_f (T\tmmathbf{p}, T\tmmathbf{q}) &
                                       \leqslant & D_f (\tmmathbf{p},
  \tmmathbf{q}),  \label{eq:contraiC}
\end{eqnarray}
that is the contractivity property with respect to the action of any
stochastic map. This inequality can be seen as a formulation of the data
processing inequality {\cite{Nielsen2000}}.

\textit{ii.) $D_f$ is invariant under the action of invertible stochastic
maps}

If the map $T$ is invertible, in the sense that it admits an inverse within
the set of stochastic maps, we have, exploiting Eq.~(\ref{eq:contraiC}) twice,
\begin{eqnarray}
  D_f (\tmmathbf{p}, \tmmathbf{q}) & = & D_f (\{ (T^{- 1} T\tmmathbf{p})_i \},
  \{ (T^{- 1} T\tmmathbf{q})_i \}) \nonumber\\
  & \leqslant & D_f (\{ (T\tmmathbf{p})_i \}, \{ (T\tmmathbf{q})_i \})
  \nonumber\\
  & \leqslant & D_f (\tmmathbf{p}, \tmmathbf{q})  \label{eq:TU}
\end{eqnarray}
and therefore invariance
\begin{eqnarray}
  D_f (T\tmmathbf{p}, T\tmmathbf{q}) & = & D_f (\tmmathbf{p}, \tmmathbf{q}) . 
  \label{eq:equalC}
\end{eqnarray}
As we shall discuss in Sect.~\ref{sect:suff}, if the function $f$ is strictly
convex, this equality implies invertibility of $T$ on the given pair
of probability distributions {\cite{Hiai2011a,Hiai2017a}}.

\textit{iii.) $D_f$ is invariant under extension of the distributions to bivariate distributions with the same marginal}

Let us now embed our probability distributions in factorized bivariate
distributions on the Cartesian product $X \times Y$, with a common marginal,
that is $\{ P_{ij} \}_{i, j \in X \times Y}$ and $\{ Q_{ij} \}_{i, j \in X
\times Y}$ such that $P_{ij} = p_i w_j$ and $Q_{ij} = q_i w_j$ with the same
probability distribution $\{ w_i \}_{i \in Y}$. We then have
\begin{eqnarray}
  D_f (\tmmathbf{Q}, \tmmathbf{P}) & = & \sum_{i, j \in X \times Y} q_i w_j f
  \left( \frac{p_i}{q_i} \right) \bignone, \nonumber\\
  & = & D_f (\tmmathbf{p}, \tmmathbf{q}),  \label{eq:tensor}
\end{eqnarray}
a simple property that will become important when exploiting the same
conceptual framework in quantum theory, where the description of composite
systems unveils new features. It corresponds to the fact that no disturbance
arises if our signal is transmitted together with other signals provided no
correlations are established. Note that in Eq.~(\ref{eq:tensor}) we have kept
the same notation for the $f$-divergence acting on bivariate distributions.

\textit{iv.) $D_f$ is jointly convex in its arguments}

We finally consider joint convexity, expressed by
\begin{eqnarray}
  D_f \left( \sum_{k = 1}^n \mu_k \bignone \tmmathbf{q}_k, \sum_{k = 1}^n
  \mu_k \bignone \tmmathbf{p}_k \right) & \leqslant & \sum_{k = 1}^n \mu_k D_f
  \left( \bignone \tmmathbf{q}_k, \tmmathbf{p}_k \right) .  \label{eq:jointc}
\end{eqnarray}
Given the expression Eq.~(\ref{eq:f}) of the $f$-divergence, to prove this
property it is enough to show joint convexity of the function
\begin{eqnarray}
  g (x, y) & = & xf \left( \frac{y}{x} \right) . 
\end{eqnarray}
Convexity of $f$ warrants that the Hessian of the function $g$, that is the
square matrix of the partial second derivatives of $g$, is always positive
semi-definite. This property is equivalent to convexity of the function $g$.

\subsection{Relevant examples of classical \texorpdfstring{$f$-divergences}{f-divergences}}\label{sect:fex}

\subsubsection*{Kullback-Leibler divergence}The most prominent $f$-divergence
is obtained considering the convex function
\begin{eqnarray}
  f_{\tmop{kl}} (t) & = & t \log t  \label{eq:fkl}
\end{eqnarray}
leading to the famous Kullback-Leibler divergence or classical relative
entropy, namely
\begin{eqnarray}
  D_{f_{\tmop{kl}}} (\tmmathbf{p}, \tmmathbf{q}) & = & \sum_{i \in X} p_i \log
  \left( \frac{p_i}{q_i} \right) . 
\end{eqnarray}
In this case, as well-known, the triangle inequality fails and the
quantity can be infinite. As we shall see this quantity has a natural quantum
counterpart, for which skewed variants have been considered, that allow to
avoid infinite values and to introduce triangle-like inequalities to bound the
variation of the quantity when varying one of the arguments
{\cite{Audenaert2014a,Audenaert2014c,Smirne2022b}}.

\subsubsection*{Skewed relative entropies}These skewed relative entropies can
be introduced also in the classical framework and for the case of the convex
functions
\begin{eqnarray}
  f^{\mu}_{\tmop{skew}} (t) & = & \frac{1}{\log (1 / \mu)} t \log \left(
  \frac{t}{\mu t + (1 - \mu)} \right),  \label{eq:fcsd}
\end{eqnarray}
parametrized by a skewing parameter $\mu \in (0, 1)$, lead to the classical
skew divergences
\begin{equation}
  D_{f^{\mu}_{\tmop{skew}}} (\tmmathbf{p}, \tmmathbf{q}) =  \frac{1}{\log
  (1 / \mu)} \sum_{i \in X} p_i \log \left( \frac{p_i}{\mu p_i + (1 - \mu)
  q_i} \right). 
\end{equation}
On similar grounds for the convex functions
\begin{equation}
  f^{\mu}_{\tmop{hsd}} (t)  =  \mu t \log t - (\mu t + (1 - \mu)) \log ((1 -
  \mu) + \mu t)  \label{eq:fhsd}
\end{equation}
we obtain the divergences
\begin{eqnarray}
  D_{f^{\mu}_{\tmop{hsd}}} (\tmmathbf{p}, \tmmathbf{q}) & = & \mu \sum_{i \in
  X} p_i \log p_i + (1 - \mu) \sum_{i \in X} q_i \log q_i \\
  &  & - \sum_{i \in X} (\mu p_i + (1 - \mu) q_i) \log (\mu p_i + (1 - \mu)
  q_i), \nonumber
\end{eqnarray}
corresponding to a special case of the Holevo quantity {\cite{Nielsen2000}}.

\subsubsection*{Kolmogorov distance}Another example of $f$-divergence that
deserves being singled out in the present context is obtained for
\begin{eqnarray}
  f_{\tmop{vd}} (t) & = & \frac{1}{2} | 1 - t |,  \label{eq:fqsd}
\end{eqnarray}
leading to the variation or Kolmogorov distance {\cite{Kolmogorov1963a}}
\begin{eqnarray}
  D_{f _{\tmop{vd}}} (\tmmathbf{p}, \tmmathbf{q}) & = & \frac{1}{2} \sum_{i
  \in X} | p_i - q_i | . 
\end{eqnarray}
In this case the obtained divergence becomes a distance, thus warranting the
triangle inequality. As discussed in {\cite{Csiszar1975a}}, this is generally
not the case, even though divergences can have the properties of squared
distances. Making reference to the considered examples, taking the value of
the skewing parameter $\mu = \frac{1}{2}$, both $D_{f^{\mu}_{\tmop{skew}}}$
and $D_{f^{\mu}_{\tmop{hsd}}}$ lead to the same expression, known as
Jensen-Shannon divergence
\begin{eqnarray}
  D_{f_{\tmop{js}}} (\tmmathbf{p}, \tmmathbf{q}) & = & \frac{1}{2} \sum_{i \in
  X} p_i \log \left( \frac{p_i}{\tfrac{1}{2} (p_i + q_i)} \right) \nonumber\\
  &&
  + \frac{1}{2} \sum_{i \in X} q_i \log \left( \frac{q_i}{\tfrac{1}{2} (p_i +
  q_i)} \right), 
\end{eqnarray}
whose square root has indeed been shown to be a distance satisfying the
triangle inequality {\cite{Endres2003a}}, as it happens for the
Hellinger divergence.

\subsection{Sufficiency and data processing inequality}\label{sect:suff}

An important notion in the framework of distinguishability quantifiers is the
notion of sufficiency or reversibility of a stochastic transformation with
respect to a set of probability distributions {\cite{Hiai2011a,Jencova2012a}}.
\begin{definition}
  (Sufficiency of stochastic maps)
  Consider a set $\wp$ of probability distributions and a stochastic map $T$.
If $T$ is invertible on any  $\tmmathbf{p} \in\wp$, it is said
  to be sufficient or reversible with respect to $\wp$.
\end{definition}
We recall that $T$ is invertible on $\tmmathbf{p}$ if
there exists a stochastic map $T^{- 1}$ such that
\begin{equation}
    (T^{- 1} \circ T) \tmmathbf{p} = \tmmathbf{p}.  \label{eq:invariap}  
\end{equation}
The action of the transformation $T$ does not lead to any loss of information
that would reduce the capability to discriminate between $\tmmathbf{q},\tmmathbf{p}\in\wp$, so that the statistics at the output is sufficient for this
task. Indeed, if we now consider distinguishability quantifiers given by
$f$-divergences, as already shown in Sect.~\ref{sect:inv} invertibility of the
stochastic map ensures saturation of the data processing inequality. On the
other hand if the function $f$ is strictly convex, validity of
\begin{equation}
    D_{f}(T\tmmathbf{p} ,T\tmmathbf{q} ) =  D_{f}(\tmmathbf{p},\tmmathbf{q} ) \label{eq:invariaFIN}
 \end{equation}
 implies that $T$ is invertible on the
pair $\tmmathbf{q}$ and $\tmmathbf{p}$, that is there exists a stochastic map
$T^{- 1}$ such that Eqs.~(\ref{eq:invariap}) holds for both $\tmmathbf{p}$ and $\tmmathbf{q}$
\cite{Hiai2011a,Jencova2012a}. In this case invariance of the considered $f$-divergence with respect to the action of the map on a collection of probability distributions implies its sufficiency with respect to these probability distributions:
\begin{equation}
D_{f}(T\tmmathbf{p} ,T\tmmathbf{q} ) =  D_{f}(\tmmathbf{p},\tmmathbf{q} ) \Rightarrow 
T \text{ sufficient on }\tmmathbf{p}\text{ and }\tmmathbf{q}
\label{eq:inv_suff}
\end{equation}
for $f$ strictly convex.

\subsection{Quantum statistical description}\label{sect:qm}

As it is well-known, quantum mechanics can be seen as a probability theory,
with a distinct formal structure with respect to the classical one, recovering
a classical probabilistic description when features related to the
non-commutativity of the spaces of states and operators are not relevant
{\cite{Holevo2001,Vacchini2024}}. In the framework of quantum theory the state
of a system is given by a statistical operator rather than a probability
distribution, that is an element of the convex set $\mathcal{S (\mathcal{H})}$
of positive and normalized trace class operators on the Hilbert space of the
system $\mathcal{H}$, and the natural transformations between states are given
by completely positive trace preserving maps. In analogy to the classical
case, we are then led to consider quantum distinguishability quantifiers that
provide an estimate of how a given statistical operator $\rho$ differs or
diverges from another statistical operator $\sigma$, according to
$\mathfrak{S} (\rho, \sigma) \in [0, + \infty)$, where we have included the
situation in which this quantifier can be unbounded. We are thus led to
consider the following definition.

\begin{definition}
  (Quantum distinguishability quantifier)
  
  Consider an arbitrary pair of quantum states $\rho, \sigma \in \mathcal{S
  (\mathcal{H})}$. A quantum distinguishability quantifier is a non-negative
  function $\mathfrak{S} (\rho , \sigma)$ of $\rho$ and $\sigma$ that
  satisfies contractivity under the action of an arbitrary completely positive
  trace preserving map $\Phi$ according to
  \begin{eqnarray}
    \mathfrak{S} (\Phi [\rho], \Phi [\sigma]) & \leqslant & \mathfrak{S} (\rho
    , \sigma) .  \label{eq:inv1}
  \end{eqnarray}
\end{definition}

As in the classical case, the defining property for such distinguishability
quantifiers is the contractivity under the action of transformations that
describe in the quantum case general state changes and therefore all possible
disturbances, namely completely positive trace preserving maps.

Along the same line of proof as Eq.~(\ref{eq:TU}), this property implies
invariance under the action of an invertible transformation, that is 
\begin{equation}
      \mathfrak{S} (\Phi [\rho], \Phi [\sigma]) = \mathfrak{S} (\rho
    , \sigma)   \label{eq:sussq}
\end{equation}
if $\Phi$ is invertible on the pair $\rho,\sigma$ with an inverse given by a completely positive trace preserving map.   
In particular, we have
\begin{eqnarray}
  \mathfrak{S} (U \rho U^{\dag}, U \sigma U^{\dag}) & = & \mathfrak{S} (\rho ,
  \sigma),  \label{eq:invU}
\end{eqnarray}
with $U$ a unitary operator and no constraint on $\rho$ and $\sigma$, given that unitary transformations
\begin{eqnarray}
  \mathcal{U} [\rho] & = & U \rho U^{\dag}  \label{eq:U}
\end{eqnarray}
are the only completely positive maps invertible on the whole set of states.

Another important consequence of the contractivity property
Eq.~(\ref{eq:inv1}) is stability of the distinguishability quantifier
$\mathfrak{S}$ with respect to the extension of its entries to a tensor
product with the same marginal, that is
\begin{eqnarray}
  \mathfrak{S} (\rho \otimes \tau, \sigma \otimes \tau) & = & \mathfrak{S}
  (\rho , \sigma),  \label{eq:invTensor}
\end{eqnarray}
where the tensor product is taken with a statistical operator $\tau$ in an
arbitrary auxiliary space, corresponding to so-called assignment maps
\begin{eqnarray}
  \mathcal{A}_{\tau} [\rho] & = & \rho \otimes \tau .  \label{eq:A}
\end{eqnarray}
Note that also in this case, as in Eq.~(\ref{eq:tensor}), according to
standard usage we have kept the same notation for the distinguishability
quantifier independently of the space on which the pair of states to be
compared acts. This property is the quantum counterpart of
Eq.~(\ref{eq:tensor}) and describes the invariance of the comparison of the
two states with respect to the involvement of other degrees of freedom,
provided they are uncorrelated from the system of interest. Its proof relies
on the fact that both the partial trace and the assignment map are completely
positive trace preserving maps, and the partial trace acts as left inverse of
the assignment map.

The non-commutativity of quantum states, that correspond to classical
probability distributions, and of quantum observables, that correspond to
random variables, makes non-trivial the problem of extending the approach of
Sect.~\ref{sect:fdiv} to the quantum framework. The definition of a proper
quantum distinguishability quantifier starting from a convex function $f$,
which provided a useful and unified perspective in the classical framework,
involves many subtleties. Indeed, to a given function $f$ distinct
distinguishability quantifiers can be associated {\cite{Hiai2011a}}. Moreover,
even if $f$ is a strictly convex function providing an operator-convex
function, the preservation of the corresponding $f$-divergence under the
action of a given completely positive trace preserving map does not imply
reversibility of the quantum transformation {\cite{Jencova2012a}}. We thus
introduce the following definition.

\begin{definition}
  \label{def:qfd}(Quantum $f$-divergence)
  
  Consider an arbitrary pair of quantum states $\rho, \sigma \in \mathcal{S
  (\mathcal{H})}$. A quantum $f$-divergence $\mathfrak{S}_f$ is a quantum
  distinguishability quantifier that restricted to commuting pairs of quantum
  states, namely such that $[\rho, \sigma] = 0$, concides with the classical
  $f$-divergence of the probability distributions $\tmmathbf{p}_{\rho}$ and
  $\tmmathbf{p}_{\sigma}$ of their eigenvalues
  \begin{eqnarray}
    \mathfrak{D}_f (\rho , \sigma) & = & D_f (\tmmathbf{p}_{\rho},
    \tmmathbf{p}_{\sigma}) .  \label{eq:qfd}
  \end{eqnarray}
\end{definition}

In analogy to Sect.~\ref{sect:suff}  we can introduce the following
definition.

\begin{definition}
  (Invertibility for completely positive trace preserving maps)
  
  Consider a collection $\mathcal{P} $ of quantum states and a
  completely positive trace preserving map $\Phi$. 
If $\Phi$ is invertible on any  $\rho \in\mathcal{P} $, with the inverse
  $\tilde{\Phi}$ completely positive and trace preserving, the completely
  positive trace preserving map $\Phi$ is said to be invertible (or sufficient) with respect
  to the collection $\mathcal{P} $.
\end{definition}

Also in the quantum framework we can ask whether a distinguishability quantifier $\mathfrak{D}$ is such that validity of   
\begin{equation}   
    \mathfrak{D}( \Phi [\rho], \Phi [\sigma])=\mathfrak{D}( \rho, \sigma)  \label{eq:suffa}
\end{equation}
on a set of states implies invertibility of the map $\Phi$ on this set of states, i.e.:
\begin{equation}   
    \mathfrak{D}( \Phi [\rho], \Phi [\sigma])=\mathfrak{D}( \rho, \sigma) \Rightarrow \Phi\text{  sufficient on }\rho\text{ and }\sigma,  \label{eq:inv_suff_q}
\end{equation}
see Eq. \eqref{eq:inv_suff} for the classical analogue.
Importantly, this property was shown to hold for the natural quantum counterpart of the
Kullback-Leibler divergence, namely the Umegaki's quantum relative entropy by
Petz {\cite{Petz1986a}}, thus implying the same property for the Holevo
quantity {\cite{Shirokov2013a}}. We refer the interested reader to
\cite{Jencova2012a,Hiai2017a} and references therein for a proper account of
these features, as well as to {\cite{Jencova2006a,Buscemi2012a,Nagasawa2024a}} for the notion of sufficiency in the quantum framework. This richness and complexity of situations is a recurring
paradigm in moving from a classical to a quantum description. Operator
ordering in replacing classical variables, given for example by functions of
position and momentum, with quantum observables immediately appears as one of
the important topics in treatment of quantum mechanics inspired by classical
mechanics, see e.g. {\cite{Zachos2005a}}. This is all the more true when
investigating features of quantum mechanics starting from concepts and results
in quantum probability, a very natural perspective in view of the fact that
indeed quantum mechanics is a probability theory. In this scenario we can
recall e.g. quantum counterparts of Markovian
{\cite{Lindblad1976a,Gorini1976a,Vacchini2005a}} or non-Markovian
{\cite{Breuer2008a,Vacchini2016b,Vacchini2020a}} evolutions, where again the
non-commutativity of the relevant operator spaces leads to a variety of new
features.

Despite these facts, the unifying perspective suggested by classical
$f$-divergences still remains very useful. We will indeed consider various
distinguishability quantifiers used in the physical literature that fall within the class of quantum $f$-divergences for a suitable convex function $f$.

\subsection{Relevant examples of quantum 
\texorpdfstring{$f$-divergences}{f-divergences}}
\label{sect:fqex}

We thus introduce here some relevant examples of quantum $f$-divergences, providing
the natural counterparts of the quantities considered in Sect.~\ref{sect:fex},
that have recently been considered in order to estimate the non-Markovianity
of a quantum dynamical evolution {\cite{Breuer2009b,Megier2021a}}.

\subsubsection*{Quantum relative entropy}The most relevant, corresponding to
Eq.~(\ref{eq:fkl}), is the quantum relative entropy given by
\begin{eqnarray}
  S (\rho , \sigma) & = & \tmop{Tr} \{ \rho \log \rho \} - \tmop{Tr} \{ \rho
  \log \sigma \}  \label{eq:qrel}
\end{eqnarray}
when the support of the statistical operator $\sigma$ contains the support of
$\rho$, and infinite otherwise.

\subsubsection*{Quantum skew divergence}Recently, it has been put into
evidence that this quantity can be modified to obtain a distinguishability
quantifier that is bounded and can be further shown to obey suitable
triangle-like inequalities {\cite{Audenaert2014c}}, according to
\begin{eqnarray}
  S_{\mu} (\rho , \sigma) & = & \frac{\mu}{\log (1 / \mu)} S (\rho , \mu \rho
  + (1 - \mu) \sigma) \nonumber\\
  &  & + \frac{1 - \mu}{\log (1 / (1 - \mu))} S (\sigma , (1 - \mu) \sigma +
  \mu \rho)  \label{eq:QSD}
\end{eqnarray}
where $\mu$ is a skew or mixing parameter taking values in $(0, 1)$ and the
expression is symmetric under the exchange $\{ \mu, \rho \} \leftrightarrow \{
1 - \mu, \sigma \}$. A closely related quantity is the Holevo skew divergence,
which shares the same symmetry and also takes values in the finite interval
$[0, 1]$, namely
\begin{eqnarray}
  K_{\mu} (\rho , \sigma) & = & \frac{\mu}{h (\mu)} S (\rho , \mu \rho + (1 -
  \mu) \sigma) \nonumber\\
  &  & + \frac{1 - \mu}{h (1 - \mu)} S (\sigma , (1 - \mu) \sigma + \mu
  \rho),  \label{eq:HSD}
\end{eqnarray}
with $h (\mu) = - \mu \log \mu - (1 - \mu) \log (1 - \mu)$ the binary entropy
associated to the distribution $\{ \mu, 1 - \mu \}$. The latter quantity
corresponds to the Holevo quantity for the ensemble $\{ \mu, \rho ; 1 - \mu,
\sigma \}$, which suitably normalized has all the features of
distinguishability quantifiers and can be seen as an $f$-divergence for $f =
f^{\mu}_{\tmop{hsd}} (t) / h (\mu)$, that is we can write
\begin{eqnarray}
  K_{\mu} (\rho , \sigma) & = & \frac{1}{h (\mu)} \chi (\{ \mu, \rho ; 1 -
  \mu, \sigma \})  \label{eq:holhol}
\end{eqnarray}
where $\chi$ denotes the Holevo quantity for the considered ensemble of two
elements, according to the general definition {\cite{Nielsen2000}}
\begin{equation}
  \chi (\{ \mu_1, \rho_1 ; \ldots ; \mu_n, \rho_n \})  =  S \left( \sum_{i =
  1}^n \mu_i \rho_i \right) - \sum_{i = 1}^n \mu_i S (\rho_i) \bignone . 
\end{equation}
Indeed, Eq.~(\ref{eq:HSD}) can also be seen as a particular instance of Donald's
identity {\cite{Bengtsson2017}}. 
We stress that the property of 
the quantum relative entropy {\cite{Petz1986a}}, that entails sufficiency of a completely positive trace preserving map upon saturation of the data processing inequality, see Eq. \eqref{eq:inv_suff_q}, extends to these related
quantum distinguishability quantifiers {\cite{Shirokov2013a}}.

\subsubsection*{Trace distance}Also the $f$-divergence associated to the
convex function Eq.~(\ref{eq:fqsd}) has a quantum counterpart given by a
distance, known as trace distance
\begin{eqnarray}
  D (\rho , \sigma) & = & \frac{1}{2} \| \rho  - \sigma \|_1, 
  \label{eq:traced}
\end{eqnarray}
where $\| \cdot \|_1$ denotes the trace norm {\cite{Reed1980}}.

\subsubsection*{Quantum Jensen-Shannon divergence}As last example we recall
that for the case $\mu = \tfrac{1}{2}$ both Eq.~(\ref{eq:QSD}) and
(\ref{eq:HSD}) collapse to a unique quantum Jensen-Shannon divergence
{\cite{Majtey2005a,Lamberti2008a}}
\begin{eqnarray}
  \mathsf{J} (\rho , \sigma) & = & \frac{1}{2} \left[ S \left( \rho ,
  \frac{\rho + \sigma}{2} \right) + S \left( \sigma , \frac{\rho + \sigma}{2}
  \right) \right], 
\end{eqnarray}
and also in this case it has been shown that it has the property of a squared
distance, so that its square root is a proper distance between quantum states
{\cite{Sra2021a,Virosztek2021a}}.

\section{Optimal pairs}\label{sect:mp}

We now formulate and prove our main result. We characterize the pairs of
quantum states on which a bounded quantum distinguishability quantifier
contractive under the action of a completely positive trace preserving map
takes on its maximum value $M$.

\subsection{Orthogonality as a necessary and sufficient
condition}\label{sect:ortons}

The following lemma holds as shown in {\cite{Chefles2000a}}.

\begin{lemma}
  \label{lemma:lem}Orthogonality of a pair of pure states is a sufficient
  condition for the existence of a completely positive trace preserving map
  that transforms it in an arbitrary target pair of pure states. The condition
  becomes necessary if the target pair is also orthogonal.
\end{lemma}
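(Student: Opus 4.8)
The plan is to prove both implications of Lemma~\ref{lemma:lem} by explicit construction and by a contractivity argument, respectively. For the sufficiency part, given orthogonal pure states $|\psi_1\rangle, |\psi_2\rangle$ with $\langle\psi_1|\psi_2\rangle = 0$ and an arbitrary target pair $|\phi_1\rangle, |\phi_2\rangle$, first I would exhibit a completely positive trace preserving map $\Phi$ with $\Phi[|\psi_k\rangle\langle\psi_k|] = |\phi_k\rangle\langle\phi_k|$ for $k = 1,2$. The natural candidate is the measure-and-prepare (entanglement-breaking) channel $\Phi[\rho] = \langle\psi_1|\rho|\psi_1\rangle\,|\phi_1\rangle\langle\phi_1| + \langle\psi_2|\rho|\psi_2\rangle\,|\phi_2\rangle\langle\phi_2| + \mathrm{Tr}\{(\mathbbm{1} - |\psi_1\rangle\langle\psi_1| - |\psi_2\rangle\langle\psi_2|)\rho\}\,\tau$ for any fixed state $\tau$; one checks directly that it is completely positive (it is a sum of Kraus-type terms / a legitimate instrument followed by conditional preparation) and trace preserving (the three projectors and the complementary projector sum to the identity), and that it sends each $|\psi_k\rangle\langle\psi_k|$ to the desired output. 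The orthogonality $\langle\psi_1|\psi_2\rangle = 0$ is exactly what makes $|\psi_1\rangle\langle\psi_1| + |\psi_2\rangle\langle\psi_2|$ a projector, hence what makes the complementary term well-defined and the map positive.

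For the necessity part, I would argue by contraposition using the contractivity of a suitable distinguishability quantifier. Suppose the target pair $|\phi_1\rangle, |\phi_2\rangle$ is orthogonal and that some CPTP map $\Phi$ achieves $\Phi[|\psi_k\rangle\langle\psi_k|] = |\phi_k\rangle\langle\phi_k|$. Take $\mathfrak{S}$ to be, say, the trace distance $D$ of Eq.~(\ref{eq:traced}), or equivalently any contractive quantifier that attains its maximum on orthogonal pure states and is strictly below that maximum otherwise. Contractivity, Eq.~(\ref{eq:inv1}), gives $D(|\phi_1\rangle\langle\phi_1|, |\phi_2\rangle\langle\phi_2|) \leqslant D(|\psi_1\rangle\langle\psi_1|, |\psi_2\rangle\langle\psi_2|)$. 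Since the left-hand side equals the maximal value $1$ (the target pair being orthogonal), the right-hand side must also equal $1$, which forces $|\psi_1\rangle\langle\psi_1|$ and $|\psi_2\rangle\langle\psi_2|$ to be orthogonal. Thus orthogonality of the source pair is necessary whenever the target pair is orthogonal. This is precisely the circular interplay the paper is exploiting: the lemma feeds into the main theorem, but the necessity direction itself rests on the existence of at least one contractive quantifier maximal on orthogonal states, which for pure states is elementary.

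The main obstacle I anticipate is the sufficiency direction when the \emph{source} states are not required to be orthogonal but still one wants the construction to land on an arbitrary target — however, the lemma as stated only claims sufficiency of source orthogonality, so the measure-and-prepare construction above suffices and no subtlety remains there. The genuinely delicate point, were one to want a self-contained proof, is to avoid invoking contractivity of a quantifier in the necessity direction and instead argue directly that no CPTP map can increase the overlap $|\langle\psi_1|\psi_2\rangle|$ (equivalently, decrease the fidelity) between pure states; this is the content of the fact that fidelity is monotone under CPTP maps, and combined with $F(|\phi_1\rangle\langle\phi_1|,|\phi_2\rangle\langle\phi_2|) = |\langle\phi_1|\phi_2\rangle| = 0$ it gives $|\langle\psi_1|\psi_2\rangle| \leqslant |\langle\phi_1|\phi_2\rangle| = 0$. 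Since the paper has already set up contractivity as the defining property of its quantifiers, I would simply cite monotonicity of a standard contractive quantifier (trace distance or fidelity) rather than reprove it, and point to~\cite{Chefles2000a} for the original argument.
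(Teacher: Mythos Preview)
Your proof is correct in substance, and in fact goes beyond what the paper does: the paper does not prove Lemma~\ref{lemma:lem} at all but simply states it and cites~\cite{Chefles2000a}. Your measure-and-prepare construction for the sufficiency direction is the standard one and is exactly right; orthogonality of the source pair is precisely what makes the three operators a projective measurement, hence the map CPTP. For the necessity direction, your argument via contractivity of the trace distance (or monotonicity of the fidelity) is also correct and is not circular, since these properties of the trace distance and fidelity are established independently of anything in the paper.

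There is one verbal slip you should fix: you write ``no CPTP map can increase the overlap $|\langle\psi_1|\psi_2\rangle|$ (equivalently, decrease the fidelity)''. This is stated backwards. Fidelity is \emph{non-decreasing} under CPTP maps, so a CPTP map can only \emph{increase} (or preserve) the overlap between pure states, never decrease it. Your displayed inequality $|\langle\psi_1|\psi_2\rangle| \leqslant |\langle\phi_1|\phi_2\rangle| = 0$ is nonetheless the correct one and yields the conclusion; only the sentence describing it needs to be corrected. Also, ``increase the overlap'' and ``decrease the fidelity'' are not equivalent but opposite for pure states, so the parenthetical is internally inconsistent as written.
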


We can now prove the main result of the paper.

\begin{theorem}
  \label{th:teo}Consider a quantum distinguishability quantifier
  $\mathfrak{S}$, that satisfies:
  \begin{eqnarray}
    \mathfrak{S} (\Phi [\rho], \Phi [\sigma]) & \leqslant & \mathfrak{S} (\rho
    , \sigma) \nonumber
  \end{eqnarray}
  for any pair of quantum states $\rho, \sigma \in \mathcal{S (\mathcal{H})}$
  and any completely positive trace preserving map $\Phi$,
  and such that
  \begin{eqnarray}
    \mathfrak{S} (\Phi [\rho], \Phi [\sigma]) & = & \mathfrak{S} (\rho ,
    \sigma) \nonumber
  \end{eqnarray}
  with $[\rho, \sigma] = 0$ implies the existence of a completely positive
  trace preserving map $\tilde{\Phi}$ fulfilling
  \begin{eqnarray}
    \tilde{\Phi} \circ \Phi [\rho] & = & \rho \nonumber
  \end{eqnarray}
  and
  \begin{eqnarray}
    \tilde{\Phi} \circ \Phi [\sigma] & = & \sigma,  \nonumber
  \end{eqnarray}
i.e.  the sufficiency of $\Phi$ on these states.
  Then
  \begin{eqnarray}
    \mathfrak{S} (\rho_1, \rho_2) & = & M, \nonumber
  \end{eqnarray}
  with $M$ the maximum value that $\mathfrak{S}$ can attain, if and only if
 $\tmop{supp} (\rho_1) \perp \tmop{supp} (\rho_2)$
so that in particular $[\rho_1,
  \rho_2] = 0$.
\end{theorem}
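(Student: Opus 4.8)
The plan is to prove the two implications separately, exploiting the contractivity together with the sufficiency hypothesis, and the transitivity of the existence of CPTP maps between pairs of states captured in Lemma~\ref{lemma:lem}.

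For the \emph{sufficiency} direction (orthogonality $\Rightarrow$ maximum), I would first establish that $M$ is actually attained on some orthogonal \emph{pure} pair. Since $\mathfrak{S}$ is bounded with supremum $M$, and since for any pair $\rho,\sigma$ one can purify/extend or simply dominate them, the key observation is that any state $\rho_1$ can be reached from a fixed orthogonal pure pair $\{|0\rangle\langle 0|, |1\rangle\langle 1|\}$ by a CPTP map (Lemma~\ref{lemma:lem} with arbitrary target), so by Eq.~(\ref{eq:inv1}) we get $\mathfrak{S}(\rho_1,\rho_2)\leqslant \mathfrak{S}(|0\rangle\langle 0|,|1\rangle\langle 1|)$ for \emph{every} pair; hence $M=\mathfrak{S}(|0\rangle\langle 0|,|1\rangle\langle 1|)$ and the maximum is attained, precisely, on the canonical orthogonal pure pair. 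Then, given any pair $\rho_1,\rho_2$ with $\tmop{supp}(\rho_1)\perp\tmop{supp}(\rho_2)$, I would exhibit a CPTP map sending $\{|0\rangle\langle 0|,|1\rangle\langle 1|\}$ to $\{\rho_1,\rho_2\}$: because the supports are orthogonal, the map $\Phi$ that acts as ``prepare $\rho_1$'' on the subspace spanned by $|0\rangle$ and ``prepare $\rho_2$'' on its complement (a measure-and-prepare channel using the orthogonal projectors onto the two supports) does the job. Contractivity then gives $M=\mathfrak{S}(|0\rangle\langle 0|,|1\rangle\langle 1|)\geqslant \mathfrak{S}(\rho_1,\rho_2)$, while the first step gave the reverse inequality, so $\mathfrak{S}(\rho_1,\rho_2)=M$.

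For the \emph{necessity} direction (maximum $\Rightarrow$ orthogonality, in particular commutativity), suppose $\mathfrak{S}(\rho_1,\rho_2)=M$. Apply a channel $\Phi$ that maps $\{\rho_1,\rho_2\}$ onto a \emph{commuting} pair, for instance the completely dephasing channel in the eigenbasis of $\rho_1$, or a coarse-graining that lands on a classical pair; by contractivity $\mathfrak{S}(\Phi[\rho_1],\Phi[\rho_2])\leqslant M$, but we can also always go \emph{back up} to $M$ by first contracting to a suitable classical pair and then, via the sufficiency clause applied in reverse spirit, deduce saturation. More carefully: choose $\Phi$ to be the pinching in the eigenbasis of $\rho_1+\rho_2$ (or of $\rho_1$), so that $\Phi[\rho_1],\Phi[\rho_2]$ commute; I claim $\mathfrak{S}(\Phi[\rho_1],\Phi[\rho_2])=M$ as well, because one can build a CPTP map from the commuting pair $\{|0\rangle\langle 0|,|1\rangle\langle 1|\}$ to $\{\rho_1,\rho_2\}$ whenever the latter achieves $M$ — wait, that is exactly what we do \emph{not} yet know. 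So instead the argument runs: since $M$ is attained and equals $\mathfrak{S}(|0\rangle\langle 0|,|1\rangle\langle 1|)$, and since there is a CPTP $\Lambda$ with $\Lambda[|0\rangle\langle 0|]=\rho_1$, $\Lambda[|1\rangle\langle 1|]=\rho_2$ \emph{only if} such a $\Lambda$ exists, we want to go the other way: find a CPTP $\Phi$ with $\Phi[\rho_1]$ and $\Phi[\rho_2]$ orthogonal and $\mathfrak{S}$ preserved. The clean route is to take $\Phi$ to be an optimal measurement channel (a Helstrom-type two-outcome measure-and-prepare channel outputting $|0\rangle\langle 0|$ or $|1\rangle\langle 1|$); then $\Phi[\rho_1]$ and $\Phi[\rho_2]$ commute, contractivity gives $\mathfrak{S}(\Phi[\rho_1],\Phi[\rho_2])\leqslant M$, and since the outputs are commuting states in a $2$-dimensional classical system, one checks they equal the canonical pair $\{|0\rangle\langle 0|,|1\rangle\langle 1|\}$ iff they are orthogonal, which happens iff $\rho_1\perp\rho_2$ already. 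Equality $\mathfrak{S}(\Phi[\rho_1],\Phi[\rho_2])=M$ is forced because $\mathfrak{S}(\rho_1,\rho_2)=M$ is the global max and $M$ is attained on the image pair only when it is the orthogonal classical pair; combined with the sufficiency hypothesis applied to the commuting image pair, saturation of the data processing inequality forces $\Phi$ to be reversible on $\{\rho_1,\rho_2\}$, and a reversible measure-and-prepare channel into a classical system can only exist if the inputs were already perfectly distinguishable, i.e.\ $\tmop{supp}(\rho_1)\perp\tmop{supp}(\rho_2)$.

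The main obstacle I anticipate is the necessity direction: turning ``$\mathfrak{S}$ is maximal'' into ``the supports are orthogonal'' requires carefully selecting a channel whose output is guaranteed commuting (so the sufficiency hypothesis, which is stated only for commuting pairs, can be invoked) while still preserving the value $M$, and then arguing that reversibility of that particular channel on the input pair is impossible unless the inputs were already orthogonal. The delicate point is the logical ordering: one must first pin down $M=\mathfrak{S}(|0\rangle\langle 0|,|1\rangle\langle 1|)$ using Lemma~\ref{lemma:lem} in the ``arbitrary target'' mode, \emph{then} use the ``orthogonal target'' necessity clause of the same lemma to constrain which channels can preserve the maximum. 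I would be careful to state explicitly that the pinching/measurement channel on the support subspaces has orthogonal projectors precisely because $\tmop{supp}(\rho_1)$ and $\tmop{supp}(\rho_2)$ are orthogonal, closing the loop between orthogonality of supports, commutativity, and attainment of $M$.
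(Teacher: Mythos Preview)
Your sufficiency direction has the right idea but the map is pointed the wrong way in the second step: exhibiting a CPTP map \emph{from} $\{|0\rangle\langle 0|,|1\rangle\langle 1|\}$ \emph{to} $\{\rho_1,\rho_2\}$ only reproves $\mathfrak{S}(\rho_1,\rho_2)\leqslant M$, which is already trivial since $M$ is the supremum. ``The first step gave the reverse inequality'' is not true as written. What you need is a CPTP map in the opposite direction---measure the orthogonal support projectors of $\rho_1,\rho_2$ and output $|0\rangle\langle 0|$ or $|1\rangle\langle 1|$---which gives $M=\mathfrak{S}(|0\rangle\langle 0|,|1\rangle\langle 1|)\leqslant\mathfrak{S}(\rho_1,\rho_2)$. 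This is an easy repair, and with it your sufficiency argument is in fact more elementary than the paper's (which routes through the Kleinmann et al.\ purification map).

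The necessity direction, however, has a genuine gap. The theorem's sufficiency hypothesis requires the \emph{input} pair $\rho,\sigma$ to commute in order to conclude reversibility of $\Phi$ from saturation of the data processing inequality. All of your candidate channels (dephasing, pinching, Helstrom measure-and-prepare) produce commuting \emph{outputs} while leaving the inputs $\rho_1,\rho_2$ as they are---and those are precisely the states whose commutativity you are trying to establish. So the hypothesis cannot be invoked in the way you describe. Separately, you never actually establish $\mathfrak{S}(\Phi[\rho_1],\Phi[\rho_2])=M$: contractivity only gives $\leqslant M$, and the sentence ``$M$ is attained on the image pair only when it is the orthogonal classical pair'' runs in the wrong logical direction to force equality.

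The paper resolves this by reversing the roles. One purifies the putative maximizing pair $\{\sigma_1,\sigma_2\}$ to $\{P_{\sigma_1},P_{\sigma_2}\}$ and then, via Lemma~\ref{lemma:lem}, obtains a CPTP map $\Phi$ \emph{from} an orthogonal pure pair $\{P_{\rho_1},P_{\rho_2}\}$ (which does commute) \emph{to} $\{P_{\sigma_1},P_{\sigma_2}\}$. The chain
\[
M=\mathfrak{S}(\sigma_1,\sigma_2)\leqslant\mathfrak{S}(P_{\sigma_1},P_{\sigma_2})=\mathfrak{S}(\Phi[P_{\rho_1}],\Phi[P_{\rho_2}])\leqslant\mathfrak{S}(P_{\rho_1},P_{\rho_2})=M
\]
forces saturation with a commuting \emph{domain} pair; the hypothesis then yields a CPTP inverse $\tilde\Phi$ sending $\{P_{\sigma_1},P_{\sigma_2}\}$ onto the orthogonal pair $\{P_{\rho_1},P_{\rho_2}\}$, and the necessity clause of Lemma~\ref{lemma:lem} forces $\{P_{\sigma_1},P_{\sigma_2}\}$---hence $\{\sigma_1,\sigma_2\}$---to be orthogonal. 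The missing idea in your attempt is precisely this: arrange matters so that the commuting pair sits at the \emph{input} of the saturated inequality, not at the output.
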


\begin{proof}
  We denote as $M$ the maximum value attained by the distinguishability
  quantifier $\mathfrak{S}$, with $M \in \{ 1, + \infty \}$, according to a
  proper choice of normalization if $\mathfrak{S}$ is bounded.
  
  We first show that the maximum is attained on pairs of orthogonal states.
  Indeed, given an arbitrary pair of states $\{ \sigma_1, \sigma_2 \}$, we can
  find a pair of states $\{ \rho_1, \rho_2 \}$, that are orthogonal, i.e. with
  orthogonal support, and such that
  \begin{eqnarray}
    \mathfrak{S} (\sigma_1, \sigma_2) & \leqslant & \mathfrak{S} (\rho_1,
    \rho_2) .  \label{eq:start1}
  \end{eqnarray}
  We first recall the fact shown in {\cite{Kleinmann2006a,Kleinmann2007a}}
  that the purification of a pair of orthogonal states can be obtained by
  means of a completely positive trace preserving map, that admits a
  completely positive trace preserving inverse on this pair of states, that
  is, according to Sect.~\ref{sect:suff}, is sufficient with respect to this
  pair. Denoting this map as
  \begin{eqnarray}
    \Lambda_{\tmop{pur}} : \mathcal{T} (\mathcal{H}_S \nobracket) &
    \rightarrow & \mathcal{T} (\mathcal{H}_S \otimes \mathcal{H}_A
    \nobracket), 
  \end{eqnarray}
  where we have denoted as $\mathcal{H}_S \otimes \mathcal{H}_A$ the Hilbert
  space in which the purifications have been obtained, thanks to the
  invariance property Eq.~(\ref{eq:sussq}) of Sect.~\ref{sect:qm} we have
  \begin{eqnarray}
    \mathfrak{S} (\Lambda_{\tmop{pur}} [\rho_1], \Lambda_{\tmop{pur}}
    [\rho_2]) & = & \mathfrak{S} (\rho_1, \rho_2) .  \label{eq:aux1}
  \end{eqnarray}
  We recall that the pure states $\Lambda_{\tmop{pur}} [\rho_1]$ and
  $\Lambda_{\tmop{pur}} [\rho_2]$ are still orthogonal, as directly follows
  considering their expressions in terms of vectors spanning the eigenspace of
  $\rho_1$ and $\rho_2$ respectively. We can now consider an arbitrary
  purification of the pair $\{ \sigma_1, \sigma_2 \}$, that we denote as $\{
  P_{\sigma_1}, P_{\sigma_2} \}$, with  \{$P_{\sigma_i}$\}$_{i = 1, 2}$ pure
  states in the Hilbert space $\mathcal{H}_S \otimes \mathcal{H}_A$, so that
  thanks to complete positivity and trace preservation of the partial trace we
  have
  \begin{equation}
    \mathfrak{S} (\sigma_1, \sigma_2) =\mathfrak{S} (\tmop{Tr}_A
    [P_{\sigma_1}], \tmop{Tr}_A [P_{\sigma_2}]) \leqslant \mathfrak{S}
    (P_{\sigma_1}, P_{\sigma_2}) . \label{eq:aux2}
  \end{equation}
  As a final step we exploit Lemma \ref{lemma:lem}. We can thus introduce a
  completely positive trace preserving map
  \begin{eqnarray}
    \Phi : \mathcal{T} (\mathcal{H}_S \otimes \mathcal{H}_A \nobracket) &
    \rightarrow & \mathcal{T} (\mathcal{H}_S \otimes \mathcal{H}_A \nobracket)
    \label{eq:a1}
  \end{eqnarray}
  such that
  \begin{eqnarray}
    \Phi [\Lambda_{\tmop{pur}} [\rho_i]] & = & P_{\sigma_i} \qquad i = 1, 2. 
    \label{eq:a2}
  \end{eqnarray}
  We thus have the inequality
  \begin{eqnarray}
    \mathfrak{S} (P_{\sigma_1}, P_{\sigma_2}) & \leqslant & \mathfrak{S}
    (\Lambda_{\tmop{pur}} [\rho_1], \Lambda_{\tmop{pur}} [\rho_2]), 
  \end{eqnarray}
  that combined with Eq.~(\ref{eq:aux2}) and Eq.~(\ref{eq:aux1}) proves
  Eq.~(\ref{eq:start1}).
  
  Moreover, the distinguishability quantifier $\mathfrak{S}$ takes on the same
  maximum value on each orthogonal pair of states. Consider two pairs of
  orthogonal states $\{ \rho_1, \rho_2 \}$ and $\{ \tilde{\rho}_1,
  \tilde{\rho}_2 \}$. According to the previous discussion we can introduce
  two completely positive trace preserving maps $\Lambda_{\tmop{pur}}$ and
  $\tilde{\Lambda}_{\tmop{pur}}$ sufficient with respect to the two distinct
  pairs so that both Eq.~(\ref{eq:aux1}) and
  \begin{eqnarray}
    \mathfrak{S} (\tilde{\Lambda}_{\tmop{pur}} [\tilde{\rho}_1],
    \tilde{\Lambda}_{\tmop{pur}} [\tilde{\rho}_2]) & = & \mathfrak{S}
    (\tilde{\rho}_1, \tilde{\rho}_2)  \label{eq:aux11}
  \end{eqnarray}
  hold. Furthermore thanks to orthogonality each purification according to
  Lemma \ref{lemma:lem} can be reached from the other by means of a completely
  positive trace preserving map as in Eq.~(\ref{eq:a2}) according to
  \begin{eqnarray}
    \Phi_{\rightarrow} [\Lambda_{\tmop{pur}} [\rho_i]] & = &
    \widetilde{\Lambda }_{\tmop{pur}} [\tilde{\rho}_i] \qquad i = 1, 2, 
    \label{eq:a11}
  \end{eqnarray}
  and
  \begin{eqnarray}
    \Phi_{\leftarrow} [\widetilde{\Lambda }_{\tmop{pur}} [\tilde{\rho}_i]] & =
    & \Lambda_{\tmop{pur}} [\rho_i] \qquad i = 1, 2,  \label{eq:a22}
  \end{eqnarray}
  respectively. Combining Eq.~(\ref{eq:aux11}) and Eq.~(\ref{eq:a11}) we
  obtain applying twice Eq.~(\ref{eq:inv1})
  \begin{eqnarray}
    \mathfrak{S} (\tilde{\rho}_1, \tilde{\rho}_2) & \leqslant & \mathfrak{S}
    (\rho_1, \rho_2),  \label{eq:restart1}
  \end{eqnarray}
  while combining Eq.~(\ref{eq:aux1}) and Eq.~(\ref{eq:a22}) we obtain
  \begin{eqnarray}
    \mathfrak{S} (\rho_1, \rho_2) & \leqslant & \mathfrak{S} (\tilde{\rho}_1,
    \tilde{\rho}_2),  \label{eq:restart2}
  \end{eqnarray}
  and therefore
  \begin{equation}
    \mathfrak{S} (\rho_1, \rho_2) =\mathfrak{S} (\tilde{\rho}_1,
    \tilde{\rho}_2) = M.
  \end{equation}

  We finally show that if on a pair of states the distinguishability
  quantifier $\mathfrak{S}$ attains its maximum value $M$, the pair has to be
  orthogonal. Indeed, supposing
  \begin{eqnarray}
    \mathfrak{S} (\sigma_1, \sigma_2) & = & M,  \label{eq:start2}
  \end{eqnarray}
  with $\sigma_1$ and $\sigma_2$ non orthogonal we come to a contradiction.
  Consider again a purification $\{ P_{\sigma_1}, P_{\sigma_2} \}$ on
  $\mathcal{H}_S \otimes \mathcal{H}_A$ of the pair $\{ \sigma_1, \sigma_2
  \}$. According to Lemma \ref{lemma:lem} these pure states can be obtained
  acting with a completely positive trace preserving map $\Phi$ as in
  Eq.~(\ref{eq:a1}) on a pair of pure and orthogonal states in $\mathcal{H}_S
  \otimes \mathcal{H}_A$, that we denote as $\{ P_{\rho_1}, P_{\rho_2} \}$. We
  thus have the chain of inequalities
  \begin{eqnarray}
    &&\mathfrak{S} (\sigma_1, \sigma_2) =\mathfrak{S} (\tmop{Tr}_A
    [P_{\sigma_1}], \tmop{Tr}_A [P_{\sigma_2}]) \label{eq:catena}\\
    &&\leqslant \mathfrak{S}
    (P_{\sigma_1}, P_{\sigma_2}) =\mathfrak{S} (\Phi [P_{\rho_1}], \Phi
    [P_{\rho_2}]) \leqslant \mathfrak{S} (P_{\rho_1}, P_{\rho_2}),
    \nonumber
  \end{eqnarray}
  but orthogonality of the pair $\{ P_{\rho_1}, P_{\rho_2} \}$ together with
  the hypothesis Eq.~(\ref{eq:start2}) implies that all terms in
  Eq.~(\ref{eq:catena}) are equal. We have in particular
  \begin{eqnarray}
    \mathfrak{S} (\Phi [P_{\rho_1}], \Phi [P_{\rho_2}]) & = & \mathfrak{S}
    (P_{\rho_1}, P_{\rho_2}).  \label{eq:start2bis}
  \end{eqnarray}
  
 Now we exploit the fact that
  invariance of $\mathfrak{S}$ under the action of $\Phi$ when restricted to a
  pair of commuting states, that is the equality sign in the data processing
  inequality, implies invertibility of the stochastic map obtained restricting
  $\Phi$ to such pairs, as discussed in Sect.~\ref{sect:suff}. Given this condition, there exists another stochastic map
  $\tilde{\Phi}$ acting as an inverse of $\Phi$ on this pair and therefore
  such that
  \begin{equation}
    P_{\rho_i} = \tilde{\Phi} [\Phi [P_{\rho_i}]] = \tilde{\Phi}
    [P_{\sigma_i}] \qquad i = 1, 2,
  \end{equation}
  which according to Lemma \ref{lemma:lem} implies orthogonality of the pair $\{ P_{\sigma_1}, P_{\sigma_2} \}$ and, in consequence, the orthogonality of the pair
  $\{ \sigma_1, \sigma_2 \}$, against the hypothesis.
\end{proof}

Note that orthogonality of the pair of quantum states in turn implies that the two states can be
  diagonalized together, and can be considered as a pair of classical states,
  that is two probability distributions. 
  With this, the property of invariance of $\mathfrak{S}$ under the action of $\Phi$ when restricted to a
  pair of classical states, warrants distinguishability quantifier is
  a quantum $f$-divergence according to Def.~\ref{def:qfd} with $f$ a strictly
  convex function.

\subsection{Known examples}

The characterization of optimal pairs for quantum distinguishability
quantifiers has been obtained for specific cases of interest along different
lines of proof, relying on the explicit expression of the quantifier, as we
now recall for relevant examples. The result in Sect.~\ref{sect:ortons}
provides a unified proof of general validity, pointing to contractivity under
completely positive trace preserving maps and the property that saturation of the data processing inequality of the corresponding
classical divergence implies sufficiency of the associated stochastic map as key ingredients.

For the case of the trace distance given by Eq.~(\ref{eq:traced}) the fact
that the trace distance attains its maximum value on pair of orthogonal states
directly follows from its definition, recalling that for a self-adjoint trace
class operator with spectral resolution
\begin{eqnarray}
  A & = & \sum_i a_i | \psi_i \rangle \langle \psi_i | \bignone 
  \label{eq:decompo}
\end{eqnarray}
we have
\begin{eqnarray}
  \| A \|_1 & = & \tmop{Tr} | A | \nonumber\\
  & = & \sum_i | a_i |, 
\end{eqnarray}
so that
\begin{eqnarray}
  \tmop{Tr} | \rho - \sigma | & = & 2 
\end{eqnarray}
iff $\tmop{supp} (\rho) \perp \tmop{supp} (\sigma)$. Also for the Bures
distance defined as {\cite{Bengtsson2017}}
\begin{eqnarray}
  D_B (\rho, \sigma) & = & \sqrt{1 - \tmop{Tr} \left| \sqrt{\sigma}
  \sqrt{\rho} \right|} 
\end{eqnarray}
and the Hellinger distance {\cite{Bengtsson2017}}
\begin{eqnarray}
  D_H (\rho, \sigma) & = & \sqrt{1 - \tmop{Tr} \left\{ \sqrt{\sigma}
  \sqrt{\rho} \right\}} 
\end{eqnarray}
the characterization of optimal pairs directly follows from the definition.

For the quantum skew divergence of Eq.~(\ref{eq:QSD}) a direct proof was given
in {\cite{Audenaert2014a}} building on operator monotonicity of the logarithm
for the values of the mixing parameter leading to a bounded quantifier, namely
$\mu \in (0, 1)$. This result further entails the same property for the
optimal pairs of the Holevo skew divergence of Eq.~(\ref{eq:HSD}).

\section{Counterexamples}\label{sect:cex}

We now provide simple examples to clarify the behaviour of distinguishability
quantifiers for which the contractivity property fails.

\subsection{Hilbert-Schmidt distance}

We consider the Hilbert-Schmidt distance between two states defined as
\begin{eqnarray}
  D_{\tmop{HS}} (\rho, \sigma) & = & \frac{1}{\sqrt{2}} \| \rho - \sigma \|_2,
  \label{eq:hsdist}
\end{eqnarray}
obtained from the Hilbert-Schmidt norm
\begin{eqnarray}
  \| A \|_2 & = & \sqrt{\tmop{Tr} \{ A^{\dag} A \}} 
\end{eqnarray}
of the difference of the two operators, so that it is jointly convex in its
arguments, with a prefactor chosen so that the distance takes values in the
range $[0, 1]$. Lack of contractivity of this distance under the action of a
completely positive trace preserving map was first put into evidence in
{\cite{Ozawa2000a}}.

\subsubsection*{Optimal pair}

We show that this distance takes on its maximum value only if the considered
states are both orthogonal and pure, so that orthogonality now becomes a
necessary but not sufficient condition for a pair to be optimal. Let us
consider an arbitrary pair of states $\rho, \sigma \in \mathcal{T
(\mathcal{H})}$ with orthogonal decompositions
\begin{eqnarray}
  \rho & = & \sum_i \rho_i | v_i \rangle \langle v_i | 
\end{eqnarray}
and
\begin{eqnarray}
  \sigma & = & \sum_k \sigma_k | w_k \rangle \langle w_k |, 
\end{eqnarray}
respectively. We then have, denoting with
\begin{eqnarray}
  \mathcal{P} (w) & = & \tmop{Tr} \{ w^2 \} 
\end{eqnarray}
the purity of an arbitrary state $w$, the identity
\begin{eqnarray}
  D_{\tmop{HS}}^2 (\rho, \sigma) & = & \frac{1}{2} \sum_{i, k} | \langle v_i |
  \nobracket w_k \rangle |^2 (\rho_i^2 - 2 \rho_i \sigma_k + \sigma_k^2)\\
  & = & \frac{1}{2} (\mathcal{P} (\rho) + \mathcal{P} (\sigma)) - \sum_{i, k}
  | \langle v_i | \nobracket w_k \rangle |^2 \rho_i \sigma_k . \nonumber
\end{eqnarray}
We thus obtain the inequality
\begin{eqnarray}
  D_{\tmop{HS}}^2 (\rho, \sigma) & \leqslant & \frac{1}{2} (\mathcal{P} (\rho)
  + \mathcal{P} (\sigma)),  \label{eq:puri}
\end{eqnarray}
that is saturated iff the states are orthogonal. The maximum value
$D_{\tmop{HS}}  (\rho, \sigma) = 1$ is attained if the pair is orthogonal and
both states have maximum purity, so that
\begin{eqnarray}
  \frac{1}{2} (\mathcal{P} (\rho) + \mathcal{P} (\sigma)) & = & 1. 
\end{eqnarray}
Orthogonality is therefore only a necessary condition for a pair to be among
the best distinguishable pairs according to the Hilbert-Schmidt distance. As an explicit example of states that are orthogonal but
not pure let us consider in $\mathbbm{C}^4$ the states
\begin{eqnarray}
  w_1 & = & \left(\begin{array}{cccc}
    \frac{1}{2} & 0 & 0 & 0\\
    0 & \frac{1}{2} & 0 & 0\\
    0 & 0 & 0 & 0\\
    0 & 0 & 0 & 0
  \end{array}\right) 
\end{eqnarray}
and
\begin{eqnarray}
  w_2 & = & \left(\begin{array}{cccc}
    0 & 0 & 0 & 0\\
    0 & 0 & 0 & 0\\
    0 & 0 & \frac{1}{2} & 0\\
    0 & 0 & 0 & \frac{1}{2}
  \end{array}\right) . 
\end{eqnarray}
We then have that $\tmop{supp} (w_1) \perp \tmop{supp} (w_2)$, but
$D_{\tmop{HS}}  (w_1, w_2) = 1 / \sqrt{2}$. Note that an example of states
that are orthogonal but not pure can only be obtained if the dimension of the
Hilbert space is higher than two. The same condition is necessary to show
non-contractivity of the Hilbert-Schmidt distance. This is simply due to the
fact that in $\mathbbm{C}^2$ the trace distance Eq.~(\ref{eq:traced}) and the
Hilbert-Schmidt distance Eq.~(\ref{eq:hsdist}) coincide up to a multiplicative
factor.

\subsubsection*{Lack of contractivity with respect to assignment
map\label{sect:loc}}

We note that while invariance under unitary transformations still hold, since
$D_{\tmop{HS}}  (\rho, \sigma)$ only depends on the spectrum of the operator
$\rho - \sigma$, invariance under the assignment map as in
Eq.~(\ref{eq:invTensor}) fails, despite the fact that Hilbert-Schmidt distance
is a contraction with respect to this completely positive trace preserving
map. To see this we consider Kadison's inequality
{\cite{Kadison1952a,Chruscinski2022a}}
\begin{eqnarray}
  (\Phi [A])^2 & \leqslant & \| \Phi \| \Phi [A^2], 
\end{eqnarray}
valid for a positive linear map $\Phi$ and a self-adjoint operator $A$, where
$\| \cdot \|$ denotes the uniform norm. Taking the trace of this expression
for a completely positive trace preserving map $\Phi$ we obtain
\begin{eqnarray}
  D_{\tmop{HS}}^2 (\Phi [\rho], \Phi [\sigma]) & \leqslant & \| \Phi
  [\mathbbm{1}] \|  D_{\tmop{HS}}^2 (\rho, \sigma),  \label{eq:ki}
\end{eqnarray}
where we have further used the fact that for a positive map $\| \Phi \|  = \|
\Phi [\mathbbm{1}] \| $ {\cite{Bhatia2007}}. If we now consider $\Phi$ to be
the assignment map we obtain
\begin{eqnarray}
  D_{\tmop{HS}}^2 (\rho \otimes \tau, \sigma \otimes \tau) & \leqslant & \|
  \mathbbm{} \tau \|  D_{\tmop{HS}}^2 (\rho, \sigma) .  \label{eq:bb1}
\end{eqnarray}
But for a statistical operator we have $\| \mathbbm{} \tau \| \leqslant 1$, so
that contractivity holds
\begin{eqnarray}
  D_{\tmop{HS}}  (\rho \otimes \tau, \sigma \otimes \tau) & \leqslant & 
  D_{\tmop{HS}}  (\rho, \sigma) . 
\end{eqnarray}
To obtain a finer estimate of the bound we consider an arbitrary statistical
operator $\tau$ on an auxiliary Hilbert space $\mathcal{K}$ with orthogonal
decomposition
\begin{eqnarray}
  \tau & = & \sum_i \tau_i | u_i \rangle \langle u_i |.  \label{eq:orto}
\end{eqnarray}
We then have the identity
\begin{eqnarray}
  D_{\tmop{HS}}^2 (\rho \otimes \tau, \sigma \otimes \tau) & = & \frac{1}{2}
  \tmop{Tr} ((\rho - \sigma) \otimes \tau)^2 \\
  & = & \frac{1}{2} \sum_{i, k, r} | \langle v_i | \nobracket w_k \rangle |^2
(\rho_i^2 - 2 \rho_i \sigma_k + \sigma_k^2) \tau_r^2,
\nonumber
\end{eqnarray}
and therefore
\begin{eqnarray}
  D_{\tmop{HS}} (\rho \otimes \tau, \sigma \otimes \tau) & = &
  \sqrt{\mathcal{P} (\tau)} D_{\tmop{HS}}  (\rho, \sigma),  \label{eq:assign}
\end{eqnarray}
leading to contractivity with respect to the assignment map
\begin{eqnarray}
  D_{\tmop{HS}} (\mathcal{A}_{\tau} [\rho], \mathcal{A}_{\tau} [\sigma]) &
  \leqslant & D_{\tmop{HS}}  (\rho, \sigma) .  \label{eq:assign2}
\end{eqnarray}
The exact estimate shows that invariance under the assignment map is therefore
generally obtained only when considering an auxiliary pure state. We recall in
passing that indeed for a statistical operator
\begin{equation}
  \mathcal{P} (\tau) = \| \tau \|_2^2 \leqslant \| \tau \| ,
\end{equation}
where in the notation of Eq.~(\ref{eq:orto}) we have
\begin{equation}
  \| \tau \|_2^2 = \sum_i \tau_i^2 \leqslant \tau_{\max} \sum_i \tau_i
  \bignone = \tau_{\max} = \| \tau \|,
\end{equation}
where $\tau_{\max}$ denoted the largest eigenvalue of $\tau$, so that indeed
the bound Eq.~(\ref{eq:bb1}) is consistent with Eq.~(\ref{eq:assign}).

\subsubsection*{Lack of contractivity with respect to partial
trace\label{sect:loc2}}

To provide an independent proof of failure of contractivity of the
Hilbert-Schmidt distance with respect to the one considered in
{\cite{Ozawa2000a}} we consider as completely positive trace preserving map
the partial trace. In order to exploit again Kadison's inequality
Eq.~(\ref{eq:ki}) for the case at hand we recall
\begin{eqnarray}
  \| \tmop{Tr}_E \{ \mathbbm{1}_2 \otimes \mathbbm{1}_n \} \| & = & n \|
  \mathbbm{1}_2 \| \nonumber\\
  & = & n, 
\end{eqnarray}
so that Eq.~(\ref{eq:ki}) leads to
\begin{eqnarray}
  D_{\tmop{HS}}  (\tmop{Tr}_E [\rho], \tmop{Tr}_E [\sigma]) & \leqslant &
  \sqrt{n}  D_{\tmop{HS}}  (\rho, \sigma),  \label{eq:strong}
\end{eqnarray}
pointing to the fact that failure of contractivity is now possible. Note that
this bound on the behaviour with respect to the partial trace can be obtained
along a different line of proof valid for an arbitrary Schatten norm
{\cite{Perez-Garcia2006a,Lidar2008a,Rastegin2012a}}.

Consider now in $\mathcal{H}_S \otimes \mathcal{H}_E$, with $\mathcal{H}_S
=\mathbb{C}^2$ and $\mathcal{H}_E =\mathbb{C}^n$, the states
\begin{eqnarray}
  \rho & = & P_+ \otimes \frac{1}{n} \mathbbm{1}_n  \label{eq:rho1}
\end{eqnarray}
and
\begin{eqnarray}
  \sigma & = & P_- \otimes \frac{1}{n} \mathbbm{1}_n,  \label{eq:rho2}
\end{eqnarray}
with $P_{\pm}$ projections on the eigenvectors of the Pauli matrix $\sigma_z$
with respect to the eigenvalues $+ 1$ and $- 1$ respectively. Acting with the
completely positive trace preserving map $\tmop{Tr}_E [\cdot]$ given by the
partial trace with respect to the second element of the tensor product we
obtain
\begin{eqnarray}
  \tmop{Tr}_E [\rho] & = & P_+ 
\end{eqnarray}
and
\begin{eqnarray}
  \tmop{Tr}_E [\sigma] & = & P_- 
\end{eqnarray}
respectively. We then have
\begin{eqnarray}
  D_{\tmop{HS}}  (\rho, \sigma) & = & \frac{1}{\sqrt{2}} \sqrt{\tmop{Tr}_{S
  \otimes E} \left\{ \!\left( P_+ \otimes \frac{1}{n} \mathbbm{1}_n - P_-
  \otimes \frac{1}{n} \mathbbm{1}_n \right)^2\! \right\}} \nonumber\\
  & = & \frac{1}{\sqrt{2 n^2}} \sqrt{\tmop{Tr}_{S \otimes E} \{ \mathbbm{1}_2
  \otimes \mathbbm{1}_n \}} \nonumber\\
  & = & \frac{1}{\sqrt{n}}, 
\end{eqnarray}
to be compared with
\begin{eqnarray}
  D_{\tmop{HS}}  (\tmop{Tr}_E [\rho], \tmop{Tr}_E [\sigma]) & = &
  \frac{1}{\sqrt{2}} \sqrt{\tmop{Tr}_S \{ (P_+ - P_-)^2 \}} \nonumber\\
  & = & \frac{1}{\sqrt{2}} \sqrt{\tmop{Tr}_S \{ \sigma_z^2 \}} \nonumber\\
  & = & 1. 
\end{eqnarray}
Note that according to Eq.~(\ref{eq:strong}) this is indeed the strongest
violation of trace contractivity that can be obtained, since the pair of
states considered in Eq.~(\ref{eq:rho1}) and (\ref{eq:rho2}) leads to saturate
the inequality.

\subsection{Distinguishability measure based on maximization over states}

\

As a further counterexample we consider another quantity that has recently
been suggested as a useful distinguishability quantifier to estimate the
quantumness of maps {\cite{Budini2024a}}, related to classicality studies of
dissipative dynamics {\cite{Gu2019a,Chen2019a,Budini2023b}}. We define the
distance between two states according to the expression
\begin{eqnarray}
  {D  }_{\infty} (\rho, \sigma) & = & \max_{w \in \mathcal{T (\mathcal{H})}, w
  \geqslant 0, \tmop{Tr} \{ w \} = 1} \tmop{Tr} | w (\rho - \sigma) | . 
  \label{eq:budistance}
\end{eqnarray}
This quantity has been introduced making reference to an alternative
expression for the trace distance Eq.~(\ref{eq:traced}), namely
\begin{eqnarray}
  D   (\rho, \sigma) & = & \max_{P \in \mathcal{B (\mathcal{H})}, P =
  P^{\dag}, P^2} \tmop{Tr} | P (\rho - \sigma) |,  \label{eq:budistanceNO}
\end{eqnarray}
where the maximum is taken over all projections, while in
Eq.~(\ref{eq:budistance}) the maximum is taken over states. As shown in
{\cite{Budini2024a}}, the maximum is obtained for the rank-one state given by
an eigenvector with highest eigenvalue of the self-adjoint operator $\rho -
\sigma$.

Given that for a trace class self-adjoint operator with spectral decomposition
Eq.~(\ref{eq:decompo}) we have
\begin{eqnarray}
  \| A \|  & = & \max_i | a_i |, 
\end{eqnarray}
we can write
\begin{eqnarray}
  {D  }_{\infty} (\rho, \sigma) & = & \| \rho - \sigma \|  . 
  \label{eq:exprinf}
\end{eqnarray}
We still have
\begin{equation}
  {0 \leqslant D  }_{\infty} (\rho, \sigma) \leqslant 1,
\end{equation}
but the maximum is attained exactly only when $\tmop{supp} (\rho) \perp
\tmop{supp} (\sigma)$ and at the same time at least one of the statistical
operators is pure. As for the Hilbert-Schmidt distance we thus have an
additional constraint related to purity of the states. Contractivity under
completely positive trace preserving maps is indeed lost, as can be seen
considering the action of the partial trace. As also discussed in
{\cite{Rastegin2012a}} we have
\begin{eqnarray}
  D_{\infty}  (\tmop{Tr}_E [\rho], \tmop{Tr}_E [\sigma]) & \leqslant & n 
  D_{\infty}  (\rho, \sigma),  \label{eq:strongbis}
\end{eqnarray}
where the same choice of pair of states as in Sect.~\ref{sect:loc} leads to
saturation of the inequality.

Note that we still have invariance under unitary transformations, while the
assignment map acts as a contraction
\begin{eqnarray}
  {D  }_{\infty} (\mathcal{A}_{\tau} [\rho], \mathcal{A}_{\tau} [\sigma]) & =
  & \| \rho - \sigma \|  \| \tau \|  \nonumber\\
  & \leqslant & {D  }_{\infty} (\rho, \sigma) . 
\end{eqnarray}
We finally stress that according to its explicit expression
Eq.~(\ref{eq:exprinf}) the quantifier ${D  }_{\infty}$ is jointly convex in
its argumens. Both these features are shared with the Hilbert-Schmidt
distance.

\begin{figure}[h]
  \resizebox{.9\columnwidth}{!}{\includegraphics{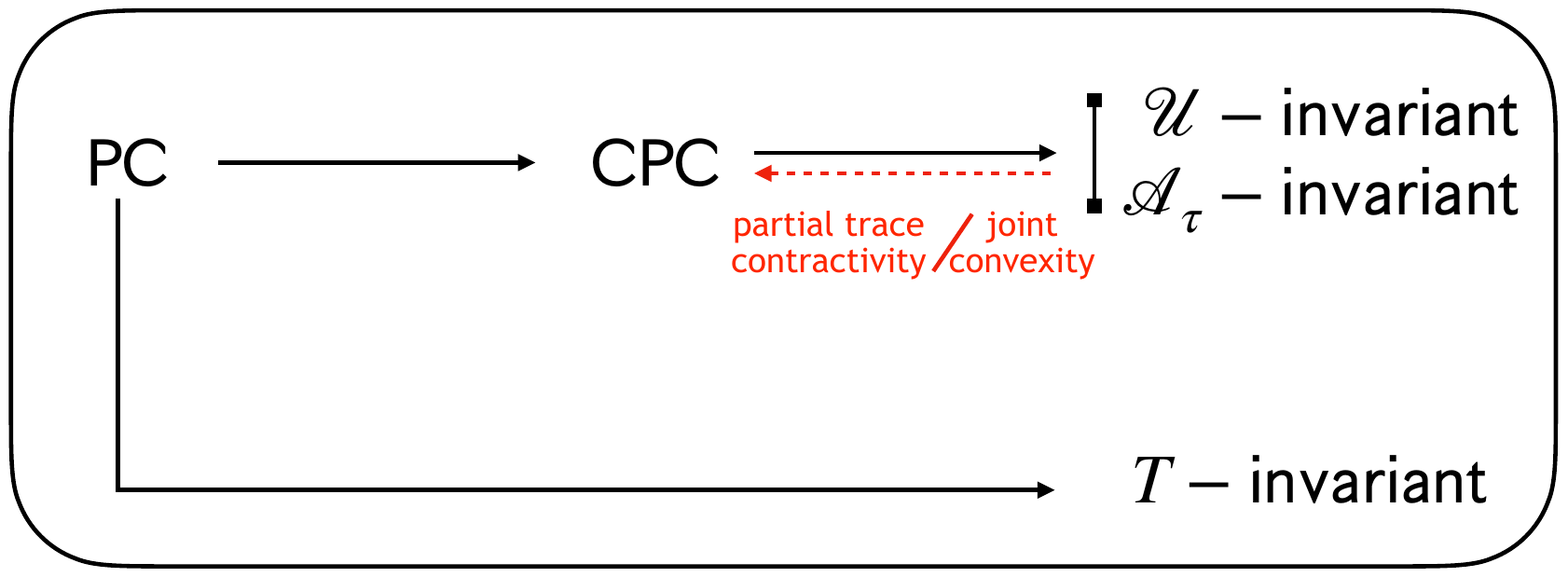}}
  \caption{\label{fig:schema}We graphically summarize the relationship between
  contractivity and invariance properties for a quantum distinguishability
  quantifier. The property of being contractive under positive trace
  preserving maps, which include completely positive trace preserving maps,
  entails invariance under transposition, unitary transformations and tensor
  product with a fixed state, as realized by the assignment map. If the
  quantum distinguishability quantifier is contractive under the partial trace
  operation or jointly convex in its arguments, invariance under unitary
  transformations and assignment maps in turn imply contractivity under any
  completely positive trace preserving map ($\mathsf{PC}$ = contracting under
  positive trace preserving maps; $\mathsf{CPC}$ = contracting under completely
  positive trace preserving maps, $T$-invariant = invariant under transposition map, $\mathcal{U}$-invariant = invariant under unitary transformations, $\mathcal{A}_\tau$-invariant = invariant under assignment map).}
\end{figure}

\section{Contractivity and invariances}\label{sect:cai}

In Sect.~\ref{sect:inv} and Sect.~\ref{sect:qm} we have considered certain
invariances of distinguishability quantifiers induced by the contractivity
property under the action of physical transformations. These invariances were
linked to the existence of a left inverse of the considered transformation. We
here use two alternative representations of completely positive maps to show
that under additional properties of the quantum distinguishability quantifier
these invariances are in fact equivalent to contractivity. As schematically
recalled in Fig.~\ref{fig:schema}, contractivity under the action of trace
preserving completely positive maps implies invariance under unitary
transformations as expressed by Eq.~(\ref{eq:invU}), as well as under the
action of assignment maps as expressed by Eq.~(\ref{eq:invTensor}). If the
considered distinguishability quantifier is also contractive under trace
preserving maps that are simply positive, then one also has invariance under
the transposition map, which is positive but not completely positive. This
happens both for the trace distance {\cite{Ruskai1994a}} and for the quantum
relative entropy {\cite{Reeb2017a}}, entailing this property for the quantum
skew divergence of Eq.~(\ref{eq:QSD}) and the Holevo skew divergence
Eq.~(\ref{eq:HSD}).

\subsubsection*{Contractivity under partial trace and equivalence between
contractivity and invariances}Given a completely positive trace preserving map
\begin{eqnarray}
  \Phi : \mathcal{T} (\mathcal{H}_S \nobracket) & \rightarrow & \mathcal{T}
  (\mathcal{H}_S \nobracket), 
\end{eqnarray}
it can always be represented in the form {\cite{Lindblad1975a}}
\begin{eqnarray}
  \Phi & = & \tmop{Tr}_E \circ \mathcal{U} \circ \mathcal{A}_{\tau} . 
  \label{eq:rapprLindblad}
\end{eqnarray}
It then immediately follows that given contractivity under partial trace,
invariance under unitary transformations and the assignment map imply
contractivity under any other completely positive transformation. We have in
fact
\begin{equation}
  \mathfrak{S} (\Phi [\rho], \Phi [\sigma]) = \mathfrak{S} (\tmop{Tr}_E
  \circ \mathcal{U} \circ \mathcal{A}_{\tau} [\rho] , \tmop{Tr}_E \circ
  \mathcal{U} \circ \mathcal{A}_{\tau} [\sigma])  \label{eq:sussqbis}
\end{equation}
and assuming contractivity under partial trace
\begin{eqnarray}
  \mathfrak{S} (\Phi [\rho], \Phi [\sigma]) & \leqslant & \mathfrak{S}
  (\mathcal{U} \circ \mathcal{A}_{\tau} [\rho] , \mathcal{U} \circ
  \mathcal{A}_{\tau} [\sigma]) 
\end{eqnarray}
so that the invariances warrant contractivity according to
Eq.~(\ref{eq:inv1}). This is exactly the path followed in the first proof of
contractivity of the quantum relative entropy {\cite{Lindblad1975a}}, the
contractivity with respect to the partial trace following from strong
subadditivity of entropy.

\subsubsection*{Joint convexity and equivalence between contractivity and
invariances} 
Restricting to the case in which the
representation Eq.~(\ref{eq:rapprLindblad}) holds with a finite-dimensional
environment, so that $\dim \mathcal{H}_E = d_E$, we now show that given joint
convexity of the considered distinguishability quantifier, as defined in
Eq.~(\ref{eq:jointc}), again invariance under unitary transformations and the
assignment map imply contractivity under any other completely positive
transformation. We first recall the relation {\cite{Collins2006a}}
\begin{eqnarray}
  \mathbb{E}_V [X] & = & \tmop{Tr} \{ X \} \frac{\mathbbm{1}}{n}, 
  \label{eq:haaraverage}
\end{eqnarray}
where the average is taken with respect to the Haar invariant measure over the
unitary group in dimension $n$, namely
\begin{eqnarray}
  \mathbb{E}_V [X] & = & \int_{V (n)} \mathd \mu_H (V) VXV^{\dag} . 
  \label{eq:2haar}
\end{eqnarray}
Combining Eq.~(\ref{eq:rapprLindblad}) and Eq.~(\ref{eq:haaraverage}) we then
have
\begin{eqnarray}
  \mathcal{A}_{\frac{\mathbbm{1 }_E}{d_E}} \circ \Phi & = & \mathbbm{1}_S
  \otimes \mathbb{E}_{V_E} \circ \mathcal{U} \circ \mathcal{A}_{\tau}, 
  \label{eq:rapprHaar}
\end{eqnarray}
where $\Phi$ is a completely positive trace preserving map acting on
$\mathcal{T} (\mathcal{H}_S \nobracket)$ and the map $\mathbbm{} \mathbbm{1
}_S \otimes \mathbb{E}_{V_E}$ according to Eq.~(\ref{eq:haaraverage}) and
Eq.~(\ref{eq:2haar}) acts on $X \in \mathcal{T} (\mathcal{H}_S \otimes
\mathcal{H}_E \nobracket)$ as
\begin{eqnarray}
  \mathbbm{1 }_S \otimes \mathbb{E}_{V_E} [X] & = & \mathbbm{1 }_S \otimes
  \int_{V_E (d_E)} \mathd \mu_H (V_E) V_E X  V_E^{\dag} \nonumber\\
  & = & \tmop{Tr}_E \{ X \} \otimes \frac{\mathbbm{1 }_E}{d_E} . 
\end{eqnarray}
For an arbitrary pair of states $\rho, \sigma \in \mathcal{T} (\mathcal{H}_S
\nobracket)$ we then have that assuming invariance under the assignment map
namely,
\begin{eqnarray}
  \mathfrak{S} (\Phi [\rho], \Phi [\sigma]) & = & \mathfrak{S} \left(
  \mathcal{A}_{\frac{\mathbbm{1 }_E}{d_E}} \circ \Phi [\rho] ,
  \mathcal{A}_{\frac{\mathbbm{1 }_E}{d_E}} \circ \Phi [\sigma] \right), 
  \label{eq:keyid}
\end{eqnarray}
thanks to Eq.~(\ref{eq:rapprHaar}) leads to
\begin{equation}
  \mathfrak{S} (\Phi [\rho], \Phi [\sigma])  =  \mathfrak{S} (\mathbbm{1}_S
  \otimes \mathbb{E}_{V_E} \circ \mathcal{U} \circ \mathcal{A}_{\tau} [\rho] ,
  \mathbbm{1}_S \otimes \mathbb{E}_{V_E} \circ \mathcal{U} \circ
  \mathcal{A}_{\tau} [\sigma]), 
\end{equation}
so that further considering joint convexity and normalization of the Haar
measure we finally obtain
\begin{eqnarray}
  \mathfrak{S} (\Phi [\rho], \Phi [\sigma]) & \leqslant & \int_{V_E (d_E)}
  \mathd \mu_H (V_E) \mathfrak{S} (\mathbbm{1 }_S \otimes V_E (\mathcal{U}
  \circ \mathcal{A}_{\tau} [\rho])  \nonumber\\
&&  \mathbbm{1}_S \otimes V_E^{\dag},\mathbbm{1 }_S \otimes V_E  (\mathcal{U} \circ \mathcal{A}_{\tau} [\sigma]) 
  \mathbbm{1}_S \otimes V_E^{\dag}) \nonumber\\
  & = & \int_{V_E (d_E)} \mathd \mu_H (V_E) \mathfrak{S} (\rho, \sigma)
  \nonumber\\
  & = & \mathfrak{S} (\rho, \sigma) . 
\end{eqnarray}
We note that joint convexity is a general property of classical divergences,
as shown at the end of Sect.~\ref{sect:inv}. In the quantum framework it holds
for the relative entropy, as well as the square of the Bures and Hellinger
distances that are known as Bures and Hellinger quantum divergences
respectively, while the property is not verified for the Bures and Hellinger
distance themselves {\cite{Spehner2017a}}.

Note that indeed both invariances, under unitary transformations and
assignment map, are needed for joint convexity to ensure contractivity under
an arbitrary completely positive trace preserving map, as shown in the
counterexamples considered in Sect.~\ref{sect:cex}, leading to failure of
Eq.~(\ref{eq:keyid}).

\section{Summary and conclusions}\label{sect:con}

In this work, we have established a direct connection between the
contractivity of quantum distinguishability quantifiers under completely
positive trace preserving maps and their optimal state pairs. 
Our main result demonstrates that for contractive quantifiers, such that saturation of the data processing inequality 
under the action of a completely positive trace preserving map on orthogonal states implies invertibility of the map on the given pair,
orthogonality is both a necessary and a sufficient condition for optimality. These distinguishability quantifiers include
the trace distance, the
quantum skew divergence and the Holevo skew divergence. In a complementary
way, we also examined the Hilbert-Schmidt distance and a distinguishability
quantifier based on a maximization over states for which contractivity under
completely positive trace preserving maps is lost, and we showed that for both
of them orthogonality remains a necessary condition for optimality but is no
longer sufficient, since also purity of the states is needed. Finally, we
have explored the interplay between contractivity and invariance properties,
showing that contractivity under completely positive trace preserving maps is
in one-to-one correspondence with invariance under unitary transformations and
assignment maps when additional conditions such as joint convexity or
contractivity under partial trace are satisfied.

Our results highlight the fundamental role of contractivity in determining the
behavior of quantum distinguishability measures, with potential applications
in quantum information theory, communication, and the study of open quantum
systems.

\begin{acknowledgments*}
  AS and BV acknowledge support from MUR and Next Generation EU via the PRIN
  2022 Project ``Quantum Reservoir Computing (QuReCo)'' (contract n.
  2022FEXLYB) and the NQSTI-Spoke1-BaC project QSynKrono (contract n.
  PE00000023-QuSynKrono).
\end{acknowledgments*}


\begin{thebibliography}{10}
  \bibitem{Bengtsson2017}I.~Bengtsson and K.~{\.Z}yczkowski,
  {\tmem{Geometry of Quantum States: An Introduction to Quantum
  Entanglement}}, 2nd~edn. (Cambridge University Press, 2017)
  
  \bibitem{Chruscinski2022a}D.~Chru{\'s}ci{\'n}ski, Phys. Rep.
  \tmtextbf{992}, 1 (2022). {\newblock}DOI:
  \href{https://doi.org/10.1016/j.physrep.2022.09.003}{10.1016/j.physrep.2022.09.003}
  
  \bibitem{Vacchini2024}B.~Vacchini, {\tmem{Open Quantum Systems:
  Foundations and Theory}}, Graduate Texts in Physics (Springer Nature
  Switzerland, Cham, 2024). {\newblock}DOI:
  \href{https://doi.org/10.1007/978-3-031-58218-9}{10.1007/978-3-031-58218-9}
  
  \bibitem{Wissmann2012a}S.~Wi{\ss}mann, A.~Karlsson, E.-M. Laine,
  J.~Piilo, and H.-P. Breuer, Phys. Rev. A \tmtextbf{86}, 062108 (2012).
  {\newblock}DOI:
  \href{https://doi.org/10.1103/PhysRevA.86.062108}{10.1103/PhysRevA.86.062108}
  
  \bibitem{Breuer2012a}H.-P. Breuer, J. Phys. B \tmtextbf{45}, 154001
  (2012). {\newblock}DOI:
  \href{https://doi.org/10.1088/0953-4075/45/15/154001}{10.1088/0953-4075/45/15/154001}
  
  \bibitem{Osan2022a}T.~M. Os{\'a}n, D.~G. Bussandri, and P.~W. Lamberti,
  Physica A: Statistical Mechanics and its Applications \tmtextbf{594}, 127001
  (2022). {\newblock}DOI:
  \href{https://doi.org/10.1016/j.physa.2022.127001}{10.1016/j.physa.2022.127001}
  
  \bibitem{Bussandri2023a}D.~G. Bussandri and T.~M. Os{\'a}n, Entropy
  \tmtextbf{25} (2023). {\newblock}DOI:
  \href{https://doi.org/10.3390/e25060912}{10.3390/e25060912}
  
  \bibitem{Lanier2025a}D.~Lanier, J.~B{\'e}guinot, and O.~Rioul,
  {\tmem{From Classical to Quantum: Explicit Classical Distributions Achieving
  Maximal Quantum \$f\$-Divergence}} (2025), \tmtexttt{2501.14340}
  
  \bibitem{DePalma2021a}G.~De~Palma, M.~Marvian, D.~Trevisan, and S.~Lloyd,
  IEEE Transactions on Information Theory pp. 1--1 (2021). {\newblock}DOI:
  \href{https://doi.org/10.1109/TIT.2021.3076442}{10.1109/TIT.2021.3076442}
  
  \bibitem{Hiai2011a}F.~Hiai, M.~Mosonyi, D.~Petz, and C.~B{\'e}ny,
  Reviews in Mathematical Physics \tmtextbf{23}, 691 (2011). {\newblock}DOI:
  \href{https://doi.org/10.1142/S0129055X11004412}{10.1142/S0129055X11004412}
  , \url{https://doi.org/10.1142/S0129055X11004412}
  
  \bibitem{Crooks}Gavin E. Crooks, {\tmem{Fisher Information and
  Statistical Mechanics}}
  
  \bibitem{Bhattacharyya1943a}Bhattacharyya, A., in {\tmem{Bulletin of the
  Calcutta Mathematical Society, Vol-35}} (1943), pp. 99--109
  
  \bibitem{Bhattacharyya1946a}A.~Bhattacharyya,
  Sankhy$\overline{\text{a}}$: The Indian Journal of Statistics (1933-1960)
  \tmtextbf{7}, 401 (1946), \tmtexttt{25047882}
  
  \bibitem{Jozsa1994a}R.~Jozsa, Journal of Modern Optics \tmtextbf{41},
  2315 (1994). {\newblock}DOI:
  \href{https://doi.org/10.1080/09500349414552171}{10.1080/09500349414552171}
  
  \bibitem{Uhlmann1976a}A.~Uhlmann, Reports on Mathematical Physics
  \tmtextbf{9}, 273 (1976). {\newblock}DOI:
  \href{https://doi.org/10.1016/0034-4877(76)90060-4}{10.1016/0034-4877(76)90060-4}
  
  \bibitem{Csiszar1963a}I.~Csisz{\'a}r, Publ. Math. Inst. Hun- gar. Acad.
  Sci. \tmtextbf{8}, 85 (1963)
  
  \bibitem{Csiszar1967a}I.~Csisz{\'a}r, Studia scientiarum Mathematicarum
  Hungarica \tmtextbf{2}, 229 (1967)
  
  \bibitem{Ali1966a}S.~M. Ali and S.~D. Silvey, Journal of the Royal
  Statistical Society: Series B (Methodological) \tmtextbf{28}, 131 (1966).
  {\newblock}DOI:
  \href{https://doi.org/10.1111/j.2517-6161.1966.tb00626.x}{10.1111/j.2517-6161.1966.tb00626.x}
  
  \bibitem{Nielsen2000}M.~Nielsen and I.~Chuang, {\tmem{Quantum
  Computation and Quantum Information}} (Cambridge University Press,
  Cambridge, 2000)
  
  \bibitem{Hiai2017a}F.~Hiai and M.~Mosonyi, Reviews in Mathematical
  Physics \tmtextbf{29}, 1750023 (2017). {\newblock}DOI:
  \href{https://doi.org/10.1142/S0129055X17500234}{10.1142/S0129055X17500234}
  
  \bibitem{Audenaert2014a}K.~M.~R. Audenaert, {\tmem{Telescopic Relative
  Entropy}}, in {\tmem{Theory of Quantum Computation, Communication, and
  Cryptography}}, edited by Bacon, Dave and Martin-Delgado, Miguel and
  Roetteler, Martin (Springer Berlin Heidelberg, Berlin, Heidelberg, 2014),
  pp. 39--52. {\newblock}DOI:
  \href{https://doi.org/10.1007/978-3-642-54429-3\\_4}{10.1007/978-3-642-54429-3\_4}
  
  \bibitem{Audenaert2014c}K.~M.~R. Audenaert, J. Math. Phys.
  \tmtextbf{55}, 112202 (2014). {\newblock}DOI:
  \href{https://doi.org/10.1063/1.4901039}{10.1063/1.4901039}
  
  \bibitem{Smirne2022b}A.~Smirne, N.~Megier, and B.~Vacchini, Phys. Rev. A
  \tmtextbf{106}, 012205 (2022). {\newblock}DOI:
  \href{https://doi.org/10.1103/PhysRevA.106.012205}{10.1103/PhysRevA.106.012205}
  
  \bibitem{Kolmogorov1963a}A.~N. Kolmogorov, Sankhy$\overline{\text{a}}$:
  The Indian Journal of Statistics, Series A (1961-2002) \tmtextbf{25}, 159
  (1963). {\newblock}DOI:
  \href{https://doi.org/http://www.jstor.org/stable/25049284}{http://www.jstor.org/stable/25049284}
  , \tmtexttt{25049261}
  
  \bibitem{Csiszar1975a}I.~Csiszar, The Annals of Probability
  \tmtextbf{3}, 146 (1975). {\newblock}DOI:
  \href{https://doi.org/10.1214/aop/1176996454}{10.1214/aop/1176996454}
  
  \bibitem{Endres2003a}D.~M. Endres and J.~E. Schindelin, IEEE
  Transactions on Information Theory \tmtextbf{49}, 1858 (2003).
  {\newblock}DOI:
  \href{https://doi.org/10.1109/TIT.2003.813506}{10.1109/TIT.2003.813506}
  
  \bibitem{Jencova2012a}A.~Jen{\v c}ov{\'a}, Reviews in Mathematical
  Physics \tmtextbf{24}, 1250016 (2012). {\newblock}DOI:
  \href{https://doi.org/10.1142/S0129055X1250016X}{10.1142/S0129055X1250016X}
  , \url{https://doi.org/10.1142/S0129055X1250016X}
  
  \bibitem{Holevo2001}A.~S. Holevo, {\tmem{Statistical Structure of
  Quantum Theory}}, Vol. m 67 of {\tmem{Lect. Not. Phys.}} (Springer, Berlin,
  2001). {\newblock}DOI:
  \href{https://doi.org/10.1007/3-540-44998-1}{10.1007/3-540-44998-1}
  
  \bibitem{Petz1986a}D.~Petz, Commun.Math. Phys. \tmtextbf{105}, 123
  (1986). {\newblock}DOI:
  \href{https://doi.org/10.1007/BF01212345}{10.1007/BF01212345}
  
  \bibitem{Shirokov2013a}M.~E. Shirokov, Sb. Math. \tmtextbf{204}, 1215
  (2013). {\newblock}DOI:
  \href{https://doi.org/10.1070/SM2013v204n08ABEH004337}{10.1070/SM2013v204n08ABEH004337}
  
  \bibitem{Jencova2006a}A.~Jen{\v c}ov{\'a} and D.~Petz, Commun. Math.
  Phys. \tmtextbf{263}, 259 (2006). {\newblock}DOI:
  \href{https://doi.org/10.1007/s00220-005-1510-7}{10.1007/s00220-005-1510-7}
  
  \bibitem{Buscemi2012a}F.~Buscemi, Commun. Math. Phys. \tmtextbf{310},
  625 (2012). {\newblock}DOI:
  \href{https://doi.org/10.1007/s00220-012-1421-3}{10.1007/s00220-012-1421-3}
  
  \bibitem{Nagasawa2024a}T.~Nagasawa, K.~Kato, E.~Wakakuwa, and
  F.~Buscemi, Phys. Rev. Res. \tmtextbf{6}, 043327 (2024). {\newblock}DOI:
  \href{https://doi.org/10.1103/PhysRevResearch.6.043327}{10.1103/PhysRevResearch.6.043327}
  
  \bibitem{Zachos2005a}C.~K. (ed.) Zachos, D.~B. (ed.) Fairlie, and T.~L.
  (ed.) Curtright, {\tmem{Quantum Mechanics in Phase Space. An Overview with
  Selected Papers.}} (World Scientific, 2005)
  
  \bibitem{Lindblad1976a}G.~Lindblad, Comm. Math. Phys. \tmtextbf{48}, 119
  (1976). {\newblock}DOI:
  \href{https://doi.org/10.1007/BF01608499}{10.1007/BF01608499}
  
  \bibitem{Gorini1976a}V.~Gorini, A.~Kossakowski, and E.~C.~G. Sudarshan,
  J. Math. Phys. \tmtextbf{17}, 821 (1976). {\newblock}DOI:
  \href{https://doi.org/10.1063/1.522979}{10.1063/1.522979}
  
  \bibitem{Vacchini2005a}B.~Vacchini, Phys. Rev. Lett. \tmtextbf{95},
  230402 (2005). {\newblock}DOI:
  \href{https://doi.org/10.1103/PhysRevLett.95.230402}{10.1103/PhysRevLett.95.230402}
  
  \bibitem{Breuer2008a}H.-P. Breuer and B.~Vacchini, Phys. Rev. Lett.
  \tmtextbf{101}, 140402 (2008). {\newblock}DOI:
  \href{https://doi.org/10.1103/PhysRevLett.101.140402}{10.1103/PhysRevLett.101.140402}
  
  \bibitem{Vacchini2016b}B.~Vacchini, Phys. Rev. Lett. \tmtextbf{117},
  230401 (2016). {\newblock}DOI:
  \href{https://doi.org/10.1103/PhysRevLett.117.230401}{10.1103/PhysRevLett.117.230401}
  
  \bibitem{Vacchini2020a}B.~Vacchini, Sci. Rep. \tmtextbf{10}, 5592
  (2020). {\newblock}DOI:
  \href{https://doi.org/10.1038/s41598-020-62260-z}{10.1038/s41598-020-62260-z}
  
  \bibitem{Breuer2009b}H.-P. Breuer, E.-M. Laine, and J.~Piilo, Phys. Rev.
  Lett. \tmtextbf{103}, 210401 (2009). {\newblock}DOI:
  \href{https://doi.org/10.1103/PhysRevLett.103.210401}{10.1103/PhysRevLett.103.210401}
  
  \bibitem{Megier2021a}N.~Megier, A.~Smirne, and B.~Vacchini, Phys. Rev.
  Lett. \tmtextbf{127}, 030401 (2021). {\newblock}DOI:
  \href{https://doi.org/10.1103/PhysRevLett.127.030401}{10.1103/PhysRevLett.127.030401}
  
  \bibitem{Reed1980}M.~Reed and B.~Simon, {\tmem{I: Functional Analysis}}
  (Academic Press, San Diego, 1980)
  
  \bibitem{Majtey2005a}A.~P. Majtey, P.~W. Lamberti, and D.~P. Prato,
  Phys. Rev. A \tmtextbf{72}, 052310 (2005). {\newblock}DOI:
  \href{https://doi.org/10.1103/PhysRevA.72.052310}{10.1103/PhysRevA.72.052310}
  
  \bibitem{Lamberti2008a}P.~W. Lamberti, A.~P. Majtey, A.~Borras,
  M.~Casas, and A.~Plastino, Phys. Rev. A \tmtextbf{77}, 052311 (2008).
  {\newblock}DOI:
  \href{https://doi.org/10.1103/PhysRevA.77.052311}{10.1103/PhysRevA.77.052311}
  
  \bibitem{Sra2021a}S.~Sra, Linear Algebra Appl. \tmtextbf{616}, 125
  (2021). {\newblock}DOI:
  \href{https://doi.org/10.1016/j.laa.2020.12.023}{10.1016/j.laa.2020.12.023}
  
  \bibitem{Virosztek2021a}D.~Virosztek, Adv. Math. \tmtextbf{380}, 107595
  (2021). {\newblock}DOI:
  \href{https://doi.org/10.1016/j.aim.2021.107595}{10.1016/j.aim.2021.107595}
  
  \bibitem{Chefles2000a}A.~Chefles, Phys. Lett. A \tmtextbf{270}, 14
  (2000). {\newblock}DOI:
  \href{https://doi.org/10.1016/S0375-9601(00)00291-7}{10.1016/S0375-9601(00)00291-7}
  
  \bibitem{Kleinmann2006a}M.~Kleinmann, H.~Kampermann, T.~Meyer, and
  D.~Bru{\ss}, Phys. Rev. A \tmtextbf{73}, 062309 (2006). {\newblock}DOI:
  \href{https://doi.org/10.1103/PhysRevA.73.062309}{10.1103/PhysRevA.73.062309}
  
  \bibitem{Kleinmann2007a}M.~Kleinmann, H.~Kampermann, T.~Meyer, and
  D.~Bru{\ss}, Applied Physics B \tmtextbf{86}, 371 (2007). {\newblock}DOI:
  \href{https://doi.org/10.1007/s00340-006-2515-4}{10.1007/s00340-006-2515-4}
  
  \bibitem{Ozawa2000a}M.~Ozawa, Phys. Lett. A \tmtextbf{268}, 158 (2000).
  {\newblock}DOI:
  \href{https://doi.org/10.1016/S0375-9601(00)00171-7}{10.1016/S0375-9601(00)00171-7}
  
  \bibitem{Kadison1952a}R.~V. Kadison, Ann. Math. \tmtextbf{56}, 494
  (1952). {\newblock}DOI:
  \href{https://doi.org/10.2307/1969657}{10.2307/1969657} , \tmtexttt{1969657}
  
  \bibitem{Bhatia2007}R.~Bhatia, {\tmem{Positive Definite Matrices}}
  (Princeton University Press, 2007), \tmtexttt{j.ctt7rxv2}
  
  \bibitem{Perez-Garcia2006a}D.~P{\'e}rez-Garc{\'i}a, M.~M. Wolf, D.~Petz,
  and M.~B. Ruskai, J. Math. Phys. \tmtextbf{47}, 083506 (2006).
  {\newblock}DOI: \href{https://doi.org/10.1063/1.2218675}{10.1063/1.2218675}
  
  \bibitem{Lidar2008a}D.~A. Lidar, P.~Zanardi, and K.~Khodjasteh, Phys.
  Rev. A \tmtextbf{78}, 012308 (2008). {\newblock}DOI:
  \href{https://doi.org/10.1103/PhysRevA.78.012308}{10.1103/PhysRevA.78.012308}
  
  \bibitem{Rastegin2012a}A.~E. Rastegin, J. Stat. Phys. \tmtextbf{148},
  1040 (2012). {\newblock}DOI:
  \href{https://doi.org/10.1007/s10955-012-0569-8}{10.1007/s10955-012-0569-8}
  
  \bibitem{Budini2024a}A.~A. Budini, R.~L. d.~M. Filho, and M.~F. Santos,
  Phys. Rev. A \tmtextbf{110}, 022228 (2024). {\newblock}DOI:
  \href{https://doi.org/10.1103/PhysRevA.110.022228}{10.1103/PhysRevA.110.022228}
  
  \bibitem{Gu2019a}B.~Gu and I.~Franco, The Journal of Chemical Physics
  \tmtextbf{151}, 014109 (2019). {\newblock}DOI:
  \href{https://doi.org/10.1063/1.5099499}{10.1063/1.5099499}
  
  \bibitem{Chen2019a}H.-B. Chen, P.-Y. Lo, C.~Gneiting, J.~Bae, Y.-N.
  Chen, and F.~Nori, Nat. Commun. \tmtextbf{10}, 3794 (2019). {\newblock}DOI:
  \href{https://doi.org/10.1038/s41467-019-11502-4}{10.1038/s41467-019-11502-4}
  
  \bibitem{Budini2023b}A.~A. Budini, Phys. Rev. A \tmtextbf{108}, 042203
  (2023). {\newblock}DOI:
  \href{https://doi.org/10.1103/PhysRevA.108.042203}{10.1103/PhysRevA.108.042203}
  
  \bibitem{Ruskai1994a}M.~B. Ruskai, Rev. Math. Phys. \tmtextbf{6}, 1147
  (1994). {\newblock}DOI:
  \href{https://doi.org/10.1142/S0129055X94000407}{10.1142/S0129055X94000407}
  
  \bibitem{Reeb2017a}A.~M{\"u}ller-Hermes and D.~Reeb, Ann. Henri
  Poincar{\'e} \tmtextbf{18}, 1777 (2017). {\newblock}DOI:
  \href{https://doi.org/10.1007/s00023-017-0550-9}{10.1007/s00023-017-0550-9}
  
  \bibitem{Lindblad1975a}G.~Lindblad, Commun. Math. Phys. \tmtextbf{40},
  147 (1975). {\newblock}DOI:
  \href{https://doi.org/10.1007/BF01609396}{10.1007/BF01609396}
  
  \bibitem{Collins2006a}B.~Collins and P.~{\'S}niady, Commun. Math. Phys.
  \tmtextbf{264}, 773 (2006). {\newblock}DOI:
  \href{https://doi.org/10.1007/s00220-006-1554-3}{10.1007/s00220-006-1554-3}
  
  \bibitem{Spehner2017a}D.~Spehner, F.~Illuminati, M.~Orszag, and W.~Roga,
  in {\tmem{Lectures on General Quantum Correlations and Their Applications}},
  edited by Fanchini, Felipe Fernandes and Soares Pinto, Diogo de Oliveira and
  Adesso, Gerardo (Springer International Publishing, Cham, 2017), pp.
  105--157, ISBN 978-3-319-53412-1. {\newblock}DOI:
  \href{https://doi.org/10.1007/978-3-319-53412-1_6}{10.1007/978-3-319-53412-1\tmrsub{6}}
\end{thebibliography}
\end{document}